\newtheorem{theorem}{Theorem} 
\newtheorem{lemma}{Lemma}
\newtheorem{proposition}{Proposition}
\newtheorem{corollary}{Corollary}
\newtheorem{remark}{Remark}
\newcommand{\E}{\mathbb{E}}
\newcommand{\N}{\mathbb{N}}
\begin{document}

%\numberwithin{equation}{section}

%\thispagestyle{empty}
%\begin{center}
%\Large\bf On a dividend problem with capital supply\\[2cm]
%\Huge\bf Titel\\[2cm]
%\normalsize\rm verfasst im Rahmen der LV MAT.373\\[2cm]
%am Institut f\"ur Analysis und Computational Number Theory (Math A) \\[1cm]
%unter Anleitung von R. Tichy und S. Thonhauser \\[1cm]
%durch\\
%\normalsize Stefan Michael Thonhauser\\[0.5cm]
%\normalsize Josef Anton Strini\\[0.5cm]
%\normalsize\rm Graz University of Technology\\[0.5cm]
%\normalsize\rm Institute of Statistics\\[2cm]
%im Sommersemester 2015
%\end{center}

%\thispagestyle{empty}
%\begin{abstract}

%\end{abstract}

\title{On a dividend problem with random funding}

\author{Josef Anton Strini\thanks{Josef Anton Strini received support by the Austrian Science Fund (FWF) Single Project P26114.} \  \& Stefan Thonhauser
}
 \date{Graz University of Technology, Institute of Statistics\\ 
Kopernikusgasse 24/III, 8010 Graz, Austria\\
\vspace*{0.5cm}
18th {J}anuary 2019}

%\institute{Josef Anton Strini \at
%Graz University of Technology, Institute of Statistics\\
%Kopernikusgasse 24/III, A-8010 Graz, Austria\\
%Tel.: 0043 316 873 6478\\
%\email{j.strini@tugraz.at}
%\and
%Stefan Thonhauser\at
%Graz University of Technology, Institute of Statistics\\
%Kopernikusgasse 24/III, A-8010 Graz, Austria\\
%Tel.: 0043 316 873 4540\\
%\email{stefan.thonhauser@math.tugraz.at}
%}
%\date{19th {J}anuary 2019}

\maketitle
\begin{abstract}
We consider a modification of the dividend maximization problem from ruin theory. Based on a classical risk process we maximize the difference of expected cumulated discounted dividends and total expected discounted additional funding (subject to some proportional transaction costs). For modelling dividends we use the common approach whereas for the funding opportunity we use the jump times of another independent Poisson process at which we choose an appropriate funding height. In case of exponentially distributed claims we are able to determine an explicit solution to the problem and derive an optimal strategy whose nature heavily depends on the size of the transaction costs.\\
\\
Keywords: Ruin Theory, Classical Risk Model, Dividends, Stochastic Control.
\end{abstract}

\section{Introduction and some first considerations}

\subsection{Overview}
In this article we deal with an extension of the classical dividend maximization problem for an underlying classical (compound Poisson) surplus process. Our proposed extension considers a random funding opportunity which is modelled by the following procedure. The insurer actively searches for investors who are willing to provide additional funding for the insurance portfolio under consideration. If the search is successful, the insurer can choose the height of the funding, increase the surplus and possibly pay out higher dividends in the future. We model the search procedure by means of an intensity $\beta\geq0$, such that the insurer finds funding opportunities at the jump-times of an additional and independent Poisson process. Naturally, new funding is costly or investors want to participate in future dividends respectively. That is why we weight this additional capital with a factor $\phi\geq 1$ which plays the role of a proportional transaction cost. The corresponding value function of our problem is the difference of expected cumulated discounted dividends and weighted expected cumulated discounted fundings, both up to the time of ruin.\\
For the case $\beta=0$ our approach just matches the classical dividend problem, i.e., no additional funding source can be found. Its treatment goes back to Gerber \cite{Gerb} and is analyzed in terms of optimal stochastic control by Azcue \& Muler \cite{AzMul2005DivRe,AzMulBook} and Schmidli \cite{schm}. The opposite extremal case $\beta \to \infty$ somehow resembles the situation of a possible capital injection at any point in time. This problem is by now well known under the keywords \emph{maximal dividends and capital injections} and was firstly formulated and solved by Kulenko \& Schmidli \cite{KuSchm} with the subtle difference that the controlled surplus process is not allowed to get ruined and thus resulting in a different value function.\\
Certainly, the approach of interventions at the jump times of another process is related to the formulation of ruin theoretic problems under random observations. Such a model comprising dividends is introduced by Albrecher et al. \cite{AlChTh} and gained some relevance in actuarial research over the last years. We need to emphasize that our present model is continuously monitored, i.e., dividend decisions can be made at any point in time and also the ruin event is immediately observed.\\
Another framework where dividend maximization problems for firm value determinations play a crucial role is finance. There the underlying process, typically given by a diffusion process, is interpreted as a cash reservoir of a company and the expected value of cumulated dividends reflects the value of this company. The present question is studied in a similar fashion in this \emph{financial diffusion} framework by Hugonnier et al. \cite{HuMM} in combination with an optimal stopping problem. As mentioned, the problem studied there is based on a continuous sample paths process and also the transaction costs parameter equals one, which results in a common single barrier type optimal strategy, both for dividends and fundings.\\
Interestingly, the recent  paper by Zhang et al. \cite{ZhChYa} study a compound Poisson risk model with a particular capital injection procedure, which is very similar to the optimal one derived in our contribution.
In contrast to our considerations, the focus is put on the determination of discounted penalty functions and dividend decisions are not part of the setup.\\
The paper is organized as follows. We start with the mathematical formulation of the model and associated stochastic optimization problem. In a next step we establish some basic properties of the value function and study parameter constellations which lead to degenerate optimal strategies. Having understood the crucial dependence on the magnitude of the transaction costs, we can subsequently determine the optimal strategy and corresponding value function. The key in this step is to prove the existence of a solution of a free-boundary value problem comprising two boundaries. We close the paper by some numerical illustrations which focus on the optimal strategy as a function of the transaction costs parameter $\phi$.

\subsection{Model setup}

In the subsequent lines we introduce the model of interest and the underlying stochastic protagonists. First of all we set up the stochastic basis of our considered model. We suppose a given  probability space $(\Omega,\,\mathbf{F},\,P)$ which carries the following underlying stochastic processes.\\
Let $N =(N_t)_{t\geq0}$ be a Poisson process with intensity $\lambda>0$ and let $\{Y_i\}_{i\in\N}$ be a sequence of independent and identically distributed random variables with distribution function denoted by $F_Y$ with $F_Y(0)=0$, we set $\E(Y_1)=\mu$ and assume $N$ to be independent of $\{Y_i\}_{i\in\N}$. Then we consider the following compound Poisson process $S=(S_t)_{t\geq0}$, 
\begin{equation*}%\label{equ:1}
S_t = \sum_{i=1}^{N_t} Y_i,
\end{equation*}
which describes, as common in the classical risk model, the sum of all claims up to time $t$.
Next we consider a jump process $B = (B_t)_{t\geq0}$  with constant intensity $\beta$, i.e. a {P}oisson process,
with which we are able to describe the occurrence times of \emph{new} investors. In particular investors occur at the jump-times of $B$.
Again, independence between $N,\,B,\,\{Y_i\}_{i\in\N}$ is assumed.\\
Based on these ingredients we identify the filtration $\mathcal{F} = (\mathcal{F}_t)_{t\geq0}$ which models the available information at time $t$.
Consequently, we have to set
$$\mathcal{F}_t = \sigma \big \{ \mathcal{F}^{N}_t, \mathcal{F}^{B}_t, \{Y_1, Y_2, \dots, Y_{N_t}\} \big\} \cup \mathcal{N},$$
where $\mathcal{F}^{N}$ and $\mathcal{F}^{B}$ are the filtrations generated by the respective processes and $\mathcal{N}$ denotes the sets of measure zero.\\
Assuming that the insurance company has an initial surplus $x_0\geq0$ and receives premiums according to a rate $c$, we define the uncontrolled surplus or cash reserve process $X = (X_t)_{t\geq0}$, by
\begin{equation*}%\label{equ:2}
X_t = x_0 + c t - S_t.
\end{equation*}
The control processes for the state process $X$ are on the one hand the dividend process $L = (L_t)_{t\geq0}$, an adapted and cáglád process, hence it is previsible, which is increasing and fulfills $L_0\equiv 0$. It represents the cumulated dividends up to time t.
On the other hand we consider the control process $f = (f_t)_{t\geq0}$, previsible as well, and $P$ - almost surely non-negative, i.e. $f_t \geq 0  \ P-a.s$. The control $f$ corresponds to the magnitude of the new funding at time $t$ in case $B$ jumps. According to that, the controlled cash reserve process reads as follows
\begin{equation*}%\label{equ:3}
X^{L,f}_t = x_0 + ct - S_t - L_t + \int_0^t f_s dB_s.
\end{equation*}
In our setting it is not allowed that ruin is induced by dividend payments and therefore the relation
$$X^{L,f}_{t+}=X^{L,f}_t-\Delta L_t \geq 0,$$
has to hold $P-a.s.$
\begin{remark}
Due to the independence assumptions we have that $P-a.s.$ the two {P}oisson processes $N$ and $B$ do not jump at the same time. Since the paths of the dividend process $L$ are left-continuous one needs to read $\Delta L_t=L_{t+}-L_t$.
%********
%think about this statement, because in this way Delta L_t may notbe \mathcal{F}_t measurable????
\end{remark}

\subsection{Optimization problem and value function}
The stated aim in our setting is to find the optimal combined dividend and funding strategy which maximizes the expected cumulated discounted future dividends deducting at least the received total additional funding.
The deduction depends on a proportional funding cost parameter denoted by $\phi\geq 1$. Hence the value function is defined by
\begin{equation*}%\label{equ:4}
V(x) = \sup_{(L,f) \in \Theta} \E_x\left[ \int_0^{\tau^{L,f}} e^{-\delta t} dL_t - \phi \int_0^{\tau^{L,f}} e^{-\delta t} f_t dB_t \right ],
\end{equation*}
here $\tau^{L,f}$ denotes the first time when the controlled cash reserve process becomes negative, namely $\tau^{L,f}=\inf\{t\geq0|X^{L,f}_t < 0\}$ and
$\Theta$ is the set containing those admissible processes $(L,f)$ such that
$$\E_x\left[ \int_0^{\tau^{L,f}} e^{-\delta t} ( dL_t + \phi f_t dB_t) \right]<\infty.$$

If we assume that the controls are constant and the dividend control suffices $dL_t=ldt$ for some $l \in \mathbb{R}^+$,
we face a Markov process whose infinitesimal generator is
$$\mathcal{A}^{l,f}g (x) = cg'(x) - l g'(x)+\beta[g(x+f) - g(x)] + \lambda \int_0^\infty [g(x-y) - g(x)] dF_Y(y).$$

Naturally, the function $g$ above has to be in the domain of the generator $\mathcal{D}(\mathcal{A}^{l,f})$, which contains absolutely continuous functions $h$ satisfying an integrability condition $\E[\vert\,h(X^{l,f}_t)\,\vert]<\infty$, see Rolski et al. \cite[Th. 11.2.2]{rols}.
Using this expression we can state the Hamilton-Jacobi-Bellman equation of this problem
\begin{multline}\label{equ:HJB}
\max \bigg\{ c g'(x) - (\lambda  + \delta) g(x) + \lambda \int_0^x g(x-y) dF_Y (y)\\
+\beta\,\sup_{f\geq 0} \{ g(x+f) - g(x) - \phi f \} , 1- g'(x)\bigg\} =0.
\end{multline}

From the shape of the HJB-equation we can immediately derive some properties of its solutions.
\begin{lemma}\label{lem1}
Let $g$ be a continuously differentiable solution to the HJB-equation \eqref{equ:HJB}, then $g$ is strictly monotone increasing ($g'\geq 1$) and bounded from below by $\frac{c}{\lambda+\delta}>0$.
\end{lemma}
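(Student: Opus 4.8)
The statement has two parts: $g' \geq 1$ everywhere, and $g \geq \frac{c}{\lambda+\delta}$. The plan is to extract both facts directly from the HJB-equation \eqref{equ:HJB} by looking at its two entries separately.

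\emph{Monotonicity.} Since $g$ is a continuously differentiable solution of the maximum equation \eqref{equ:HJB}, the second entry satisfies $1 - g'(x) \leq 0$ for all $x$, i.e. $g'(x) \geq 1$. This is immediate because a maximum of two quantities equals zero forces each quantity to be $\leq 0$. In particular $g$ is strictly increasing. (As a side remark one should note that this also forces the supremum over $f$ in the first entry to be finite: $g(x+f) - g(x) - \phi f \leq (1-\phi) f \cdot$ — no, more carefully, $g(x+f)-g(x) = \int_x^{x+f} g'(s)\,ds$, and since $\phi \geq 1$ one cannot yet bound $g'$ from above, so the finiteness of the supremum is a separate matter; but for this lemma only the lower bound $g' \geq 1$ is needed.)

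\emph{Lower bound.} Fix $x \geq 0$ and use the other entry of the maximum: the first bracketed term is also $\leq 0$, so in particular, dropping the nonnegative terms $\lambda \int_0^x g(x-y)\,dF_Y(y)$ and $\beta \sup_{f\geq 0}\{g(x+f)-g(x)-\phi f\}$ (the latter is $\geq 0$ since $f=0$ is admissible), we get
\[
c g'(x) - (\lambda+\delta) g(x) \leq c g'(x) - (\lambda+\delta) g(x) + \lambda\int_0^x g(x-y)\,dF_Y(y) + \beta\sup_{f\geq0}\{\cdots\} = 0,
\]
wait — this inequality goes the wrong way if the dropped terms could be of either sign. Since both dropped terms are $\geq 0$ (using $g \geq 0$, which itself needs justification, or more safely just using that $\sup_{f\geq0}\{\cdots\}\geq 0$ and handling the integral term via the bound we are about to prove), one obtains $c g'(x) - (\lambda+\delta)g(x) \leq 0$, hence $g(x) \geq \frac{c}{\lambda+\delta} g'(x) \geq \frac{c}{\lambda+\delta}$ using $g' \geq 1$. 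The cleanest route is: first establish $g \geq 0$ (e.g. from $cg'(x) \leq (\lambda+\delta)g(x) - \lambda\int_0^x g(x-y)dF_Y(y)$ evaluated at a hypothetical minimum, or by a Gronwall-type argument on $[0,x]$), which makes the integral term nonnegative, then conclude as above.

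\emph{Main obstacle.} The only genuinely delicate point is justifying that the integral term $\lambda\int_0^x g(x-y)\,dF_Y(y)$ is nonnegative, i.e. that $g \geq 0$ on $[0,\infty)$, before one has the sharper bound. I would handle this by a minimum-principle / Gronwall argument: suppose $g$ attains a negative value; on the interval where $g$ is smallest the integral term is bounded below by $(\text{that minimum})\cdot\lambda F_Y(x)$, and plugging into $cg'(x) = (\lambda+\delta)g(x) - \lambda\int_0^x g(x-y)dF_Y(y) - \beta\sup_{f}\{\cdots\}$ together with $g'(x)\geq 1 > 0$ yields a contradiction with $g$ being increasing and starting from $g(0)$. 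Once $g \geq 0$ is in hand, everything else is the short computation above. Everything here is elementary; no fixed-point or compactness arguments are required for this lemma.
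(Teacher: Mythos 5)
Your monotonicity argument is exactly the paper's: the second entry of the maximum forces $1-g'(x)\le 0$. For the lower bound, however, you have made the problem harder than it is and left the hard part unfinished. You correctly identify that at a general $x$ you cannot drop the term $\lambda\int_0^x g(x-y)\,dF_Y(y)$ without first knowing $g\ge 0$, but your proposed repair (a minimum-principle/Gronwall argument, described only as something you ``would'' do) is not actually carried out, and as stated it is circular in flavour: you invoke properties of $g$ near its minimum while the sign of $g$ there is precisely what is in question. That is a genuine gap in the written proof.

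The gap disappears entirely if you use the paper's observation: let $x\searrow 0$ in the first entry of the maximum. Since $F_Y(0)=0$, the integral $\lambda\int_0^x g(x-y)\,dF_Y(y)$ tends to $0$, and the supremum term is $\ge 0$ (take $f=0$), so the inequality ``first entry $\le 0$'' yields $c\,g'(0+)-(\lambda+\delta)g(0+)\le 0$, hence $g(0+)\ge \frac{c}{\lambda+\delta}g'(0+)\ge\frac{c}{\lambda+\delta}$ using $g'\ge 1$. Monotonicity (already established) then propagates this bound to all $x$. Note that the monotonicity you proved in the first step does all the work: the minimum of $g$ on $[0,\infty)$ is attained at $0$, and at $0$ the troublesome integral vanishes, so no separate nonnegativity lemma is needed. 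I recommend replacing your entire ``lower bound'' and ``main obstacle'' paragraphs with this two-line argument.
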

\begin{proof}
From the equation we directly obtain that $g'\geq 1$ and if we consider the limit
$x\searrow 0$ we get $g(0+) \geq \frac{c}{\delta+\lambda}>0$. Since $g$ is monotone increasing and continuous the assertion follows.
\end{proof}

Furthermore, we can bound the value function from below similarly as done by Azcue \& Muller \cite{AzMul2005DivRe} or by Schmidli \cite[p. 80 Lemma 2.37]{schm}.
\begin{lemma}
In the present model setup the value function fulfills 
$$V(x) \geq x + \frac{c}{\lambda+\delta}.$$
\end{lemma}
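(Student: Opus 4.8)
The plan is to exhibit an explicit admissible strategy whose value is at least $x+\frac{c}{\lambda+\delta}$, and to use the supremum definition of $V$. The natural candidate is the strategy that pays out the initial surplus immediately as a lump dividend, never uses the funding opportunity ($f\equiv 0$), and thereafter pays out the incoming premium continuously until the first claim occurs. Concretely, I would set $\Delta L_0 = x$ (so the controlled process starts at $0$), $dL_t = c\,dt$ for $t>0$ and $f_t\equiv 0$. Then the controlled cash reserve process stays at $0$ until the first claim time $T_1$, at which point it jumps to $-Y_1<0$ and ruin occurs, so $\tau^{L,f}=T_1$.

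With this strategy the value generated is
\begin{equation*}
\E_x\left[\int_0^{\tau^{L,f}} e^{-\delta t}\,dL_t\right]
= x + \E_x\left[\int_0^{T_1} c\,e^{-\delta t}\,dt\right]
= x + c\,\E\!\left[\frac{1-e^{-\delta T_1}}{\delta}\right],
\end{equation*}
and since $T_1$ is exponentially distributed with parameter $\lambda$, a direct computation gives $\E\!\left[\frac{1-e^{-\delta T_1}}{\delta}\right]=\frac{1}{\lambda+\delta}$. Hence this strategy yields exactly $x+\frac{c}{\lambda+\delta}$. It remains to check admissibility, i.e.\ that $(L,f)\in\Theta$: since $f\equiv 0$ and $dL_t = \Delta L_0\,\delta_0(dt) + c\,dt$, the integral $\E_x\left[\int_0^{\tau^{L,f}} e^{-\delta t}(dL_t+\phi f_t\,dB_t)\right]$ equals the finite quantity $x+\frac{c}{\lambda+\delta}$, so admissibility holds. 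Taking the supremum over $\Theta$ then gives $V(x)\ge x+\frac{c}{\lambda+\delta}$.

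I do not expect a genuine obstacle here; the only points requiring a little care are the bookkeeping for the initial lump payment (the dividend process is cáglád, so the jump at time $0$ has to be interpreted via $\Delta L_0 = L_{0+}-L_0$ with $L_0\equiv 0$, exactly as in the Remark) and confirming that $X^{L,f}_{0+}=x-\Delta L_0=0\ge 0$ so that the no-ruin-by-dividends constraint is respected. These are the same considerations as in Azcue \& Muler \cite{AzMul2005DivRe} and Schmidli \cite[p.~80 Lemma 2.37]{schm}, and the funding component only makes the verification easier since we simply switch it off.
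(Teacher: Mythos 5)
Your proposal is correct and follows essentially the same route as the paper: the paper's proof simply sets $f\equiv 0$ and invokes the classical lower bound for the pure dividend problem from Schmidli \cite[p.~80, Lemma 2.37]{schm}, whereas you have written out in full the explicit strategy (immediate lump payout of $x$, then dividends at rate $c$ until the first claim) that underlies that cited result. Your computation $\E\bigl[\frac{1-e^{-\delta T_1}}{\delta}\bigr]=\frac{1}{\lambda+\delta}$ and the admissibility check are both correct, so the argument is a self-contained version of the paper's citation-based proof.
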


\begin{proof}
For the special choice $(L,f)\equiv (L,0)$ we face an admissible dividend strategy for the original dividend maximization problem.
The bound follows from the above cited (by now classical) results.
%$$ V(x) \geq \sup_{L} V^{L,0}(x) = \sup_{L} E_x\left[ \int_0^\tau e^{- \delta t} dL_t\right].$$
%Now consider the strategy $L\equiv \Pi$, which immediately pays out $x$, and then pays out the cashflow $c(0)$ as dividend such that the reserve process remains %at zero until  the first claim arrives and the company is ruined. So following that strategy we get
%$$V^\Pi (x) = x + E\left[ \int_0^{\tau_1} e^{- \delta t }c\ dt \right] \geq x + \frac{c}{\lambda + \delta}.$$
%The last inequality holds since
%$$\int_0^\infty \int_0^t e^{-\delta s} ds \lambda e^{-\lambda t} dt = \int_0^\infty \frac{1}{\delta} (1-e^{-\delta t})\lambda e^{-\lambda t} dt = \frac{1}{\lambda %+ \delta}. $$
\end{proof}

\section{Solution of the optimization problem}

In the following we assume that the claim size distribution coincides with an exponential distribution with parameter $\alpha$. We try to identify an optimal strategy and determine an explicit solution to the problem. In case of an arbitrary claim size distribution one can expect a strategy of band type to be optimal. One needs to mention that the presence of the financing control complicates the situation in comparison to other modifications of the dividend problem with exponentially distributed claims in the literature.\\
We start with identifying parameter sets which lead to somehow degenerate optimal strategies.
%investigate the problem by using different types of strategies

\subsection{Optimality of keeping the reserve at zero}

For a special parameter configuration we obtain that the optimal strategy is to payout the initial reserve immediately
and keep on paying dividends such that the current reserve remains zero, which means that the dividend rate is $c$ and the first claim causes ruin.
Compare to classical results as presented in \cite[p. 93]{schm}.
\begin{lemma}
The optimal strategy is to payout immediately the initial reserve and then payout dividends at the premium rate $c$ 
%if and only if - really if an only if??????
if $(\delta+\lambda)^2\geq c\alpha\lambda$. Consequently, the value function has the following form 
\begin{equation}
V(x)= x+\frac{c}{\lambda+\delta}.
\end{equation}
\end{lemma}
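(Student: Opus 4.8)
The plan is to verify that the candidate value function $g(x) = x + \frac{c}{\lambda+\delta}$ solves the HJB equation \eqref{equ:HJB} precisely under the stated condition $(\delta+\lambda)^2 \geq c\alpha\lambda$, and then invoke a verification argument to conclude $V = g$. Since we already know from the second lemma that $V(x) \geq x + \frac{c}{\lambda+\delta}$, the main content is to show $V(x) \leq x + \frac{c}{\lambda+\delta}$, which follows once we check $g$ is a (super)solution to the HJB equation and apply a standard verification theorem for this class of stochastic control problems (in the spirit of Schmidli \cite{schm} or Azcue \& Muler \cite{AzMul2005DivRe}).

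First I would plug $g(x) = x + \frac{c}{\lambda+\delta}$ into \eqref{equ:HJB}. Since $g'(x) = 1$, the second term $1 - g'(x) = 0$ automatically, so it remains to show the first (generator) term is $\leq 0$. For the funding supremum, note $g(x+f) - g(x) - \phi f = f(1-\phi) \leq 0$ for all $f \geq 0$ since $\phi \geq 1$, so the optimal choice is $f = 0$ and this term vanishes. With the exponential claim distribution $dF_Y(y) = \alpha e^{-\alpha y}\,dy$, I would compute $\lambda \int_0^x g(x-y)\alpha e^{-\alpha y}\,dy$ explicitly — this is an elementary integral giving something of the form $\lambda\big(x + \frac{c}{\lambda+\delta}\big) - \frac{\lambda}{\alpha} + (\text{terms} \cdot e^{-\alpha x})$. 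Substituting into $cg'(x) - (\lambda+\delta)g(x) + \lambda\int_0^x g(x-y)\,dF_Y(y)$ and simplifying, one obtains an expression of the form $-e^{-\alpha x}\big(\text{const}\big)$ plus a constant, and the requirement that this be $\leq 0$ for all $x \geq 0$ reduces, after collecting terms, exactly to the inequality $(\delta+\lambda)^2 \geq c\alpha\lambda$ (checking both the constant part and the coefficient of the exponential, with the binding constraint at $x = 0$).

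Having established that $g$ satisfies the HJB equation, I would complete the upper bound via a verification argument: for any admissible strategy $(L,f) \in \Theta$, apply Itô's formula (or the Dynkin-type formula for jump processes, using that $g$ is affine hence in the domain of the generator) to $e^{-\delta t} g(X^{L,f}_t)$ up to $\tau^{L,f} \wedge T$, take expectations so that the martingale parts vanish, use the HJB inequalities to bound the drift and jump contributions by $-e^{-\delta t}\,dL_t$ (from $g' \geq 1$) and the funding term by $\phi e^{-\delta t} f_t\,dB_t$ (from the funding supremum being $\leq 0$), and finally let $T \to \infty$ using the integrability in the definition of $\Theta$ together with $g \geq 0$. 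This yields $g(x) \geq \E_x\big[\int_0^{\tau^{L,f}} e^{-\delta t}\,dL_t - \phi\int_0^{\tau^{L,f}} e^{-\delta t} f_t\,dB_t\big]$, hence $g \geq V$; combined with the earlier lemma this gives $V = g$. Finally, to identify the optimal strategy, I would check that the strategy "pay out $x_0$ immediately, then pay dividends at rate $c$, never fund" makes all the HJB inequalities tight along the controlled path (the reserve stays at $0$, ruin occurs at the first claim), so that equality holds throughout the verification computation.

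The main obstacle I anticipate is the technical bookkeeping in the verification step rather than the algebra: one must be careful about the left-continuity of $L$ (distinguishing $\Delta L_t = L_{t+} - L_t$ as noted in the Remark), handle the possibility of a lump-sum dividend at time $0$ bringing the reserve to $X^{L,f}_{0+} = x_0 - \Delta L_0$, and justify the limit $T \to \infty$ — but these are routine given the admissibility assumption $\E_x[\int_0^{\tau^{L,f}} e^{-\delta t}(dL_t + \phi f_t\,dB_t)] < \infty$ and boundedness of $g$ from below. The algebraic reduction to $(\delta+\lambda)^2 \geq c\alpha\lambda$ is a short computation once the integral is evaluated.
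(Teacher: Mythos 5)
Your proposal is correct and follows essentially the same route as the paper: verify that the linear candidate $g(x)=x+\frac{c}{\lambda+\delta}$ solves the HJB equation --- the paper delegates the generator computation to Schmidli's classical argument, which reduces to $(\delta+\lambda)^2\geq c\alpha\lambda$ exactly as you compute (the binding constraint being the derivative of the generator expression at $x=0$, since the expression itself vanishes there) --- and observe that the only new term, $\beta\sup_{f\geq 0}\{g(x+f)-g(x)-\phi f\}=\beta\sup_{f\geq 0}\{(1-\phi)f\}$, is zero because $\phi\geq 1$. The sole difference is that you write out the integral evaluation and the verification step that the paper imports by citation.
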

\begin{proof}
Using $V(x)= x + \frac{c}{\lambda+\delta}$ the proof is analogous to the one given in \cite[p. 93]{schm}. Just note that in the present problem with capital supply the additional part of the
HJB-equation  corresponding to $f$ is zero,
$$\beta \sup_{f\geq 0} \{ V(x+f) - V(x) - \phi f \}= \beta \sup_{f\geq 0} \{(1 -\phi)f\}=0,$$
since $\phi \geq 1$.
\end{proof}
From the latter result we see that we need to focus on $(\delta+\lambda)^2 < c \alpha \lambda$,
which in turn implies that $c\alpha>\delta+\lambda$, since we assume that all parameters are positive.

\subsection{An embedded problem}
At the outset of tackling the problem we try as first conjectures some common types of controls such as barrier and \emph{simple} band strategies.
It turned out that they can not be optimal in general. Therefore, in order to get an idea of the shape of the optimal strategy we exploit a numerical approach.\\
At first fix $n\in\N$ and allow for at most $n$ capital injections (at the jump times $Z_1,\ldots,Z_n$ of $B$), the corresponding family of value functions is defined by 
\begin{align*}
V_n(x)=\sup_{(L,f)\in\Theta}\E\left[\int_0^{\tau^{L,f}}e^{-\delta s}dL_s-\phi\sum_{i=1}^{B_{\tau^{L,f}}\wedge n} e^{-\delta Z_i}f_{Z_i}\right].
\end{align*}
One may notice that $V_0$ is the value function of the classical dividend maximization problem and in the situation of exponentially distributed claims is explicitely known. However, the method below does not need to assume exponential claims but is focused on barrier type dividend strategies. Certainly, this can be generalized.\\
We have that $V_n$ after using one intervention \emph{restarts} with $V_{n-1}$, which can be used when maximizing with respect to $f$. The deduced numerical procedure is as follows:
\begin{itemize}
\item[1.] Compute $V_0$ without additional capital ($f=0$), by solving
\begin{align*}
0=c g'(x)-(\delta+\lambda)g(x)+\lambda\int_0^x g(x-y)dF_Y (y),
\end{align*}
for $0\leq x \leq b$ and $g'(x)=1$ for $x> b$. If the optimal $b$ is not known one can do this for different values of $b$.
Choosing the maximizing $b$, we obtain an approximation to $V_0$ with optimal barrier, say $b_0^*$.
Then we can compute the optimal state dependent $f_0(x)$ by setting
$$f_0=\underset{f\geq 0}{\mbox{argmax}}\{V_0(x+f)-V_0(x)-\phi f\}.$$
\item[2.] Compute $V_1$, where we allow for one financial injection, exactly $f_0$, and solve for different values of $b$
\begin{align*}
0=c g'(x)-(\delta+\lambda) g(x) + \lambda \int_0^x g(x-y) dF_Y (y)
+\beta[V_0(x+f_0)-V_0(x)-\phi f_0]
\end{align*}
for $0\leq x \leq b$ and $g'(x)=1$ for $x> b$. The usage of the maximizing $b$ results in an approximation of $V_1$.
As next step replace $V_0$ by $V_1$ in the first step and go on.
\end{itemize}
Of course we have to choose in every step the best value for the threshold $b$, which is illustrated in Figure \ref{fig:policyiteration}. There \emph{Vbstar}$(x)$ denotes the numerical solution of the usual dividend problem, further $V_i$ denotes the solution of the iteration, when $i \in \{1,2, \dots \}$ funding opportunities are allowed. The value inside the brackets correspond to the values of the barrier $b$, where $\Delta = 0.02$.
\begin{figure}[H]
\begin{center}
\includegraphics[scale=0.9]{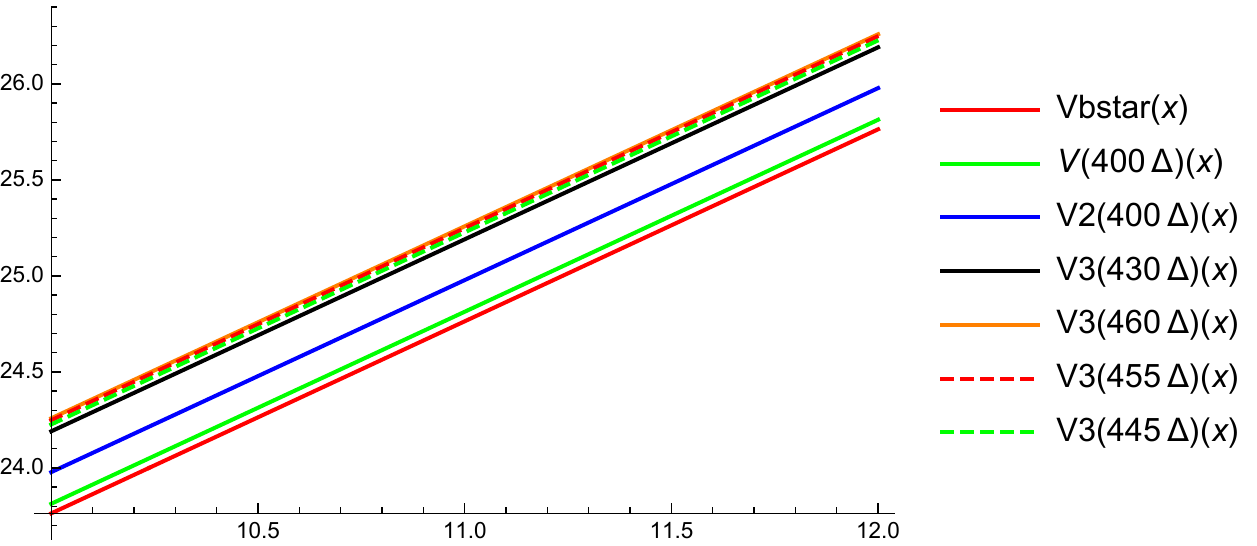}
\caption{Comparison of different iterative solutions.}\label{fig:policyiteration}
\end{center}
\end{figure}

This approach reveals a new type of possibly optimal strategy to us, which turns out to be the right conjecture.

\subsection{Resulting new strategy}
Using the results from the numerical approach, we are able to construct a new strategy for our problem.
%It turned out that the corresponding value function of this strategy with the optimal thresholds $a^*$ and $b^*$ solves the HJB equation.
The new strategy is of band type and specified by two parameters $0 \leq a \leq b<\infty$ such that
\begin{itemize}
\item the dividend strategy is of barrier type at level $b$,
\begin{align*}
\Delta L_{0+}=(x-b)I_{\{x>b\}}\\
dL_t=c\,dt,\quad t>0,
\end{align*}
\item the financing strategy only applies at reserve levels $x\in[0,a)$. It is given by $f(x)=(a-x)I_{\{0\leq x<a\}}$, with the feature that only below level $a$ we search for a funding source. If one appears, we choose the funding height to such an extent that the surplus jumps up to $a$ and in general not to the barrier level $b$.
\end{itemize}
For initial surplus $x\geq 0$ we denote the value, i.e., performance function, according to such a strategy by $V(x;a,b)$.
By construction it makes sense to write this function in the following form:
\begin{equation}\label{eq:funcstructure}
V(x;a,b) =
\begin{cases}
V_l(x;a,b), &\text{ if } 0 \leq x \leq a,\\
V_u(x;a,b), &\text{ if } a \leq x\leq b,\\
x-b + V(b;a,b)		, &\text{ if } x > b.
\end{cases}
\end{equation}
Whereby, using \emph{Dynkin-formula} type arguments or classical arguments based on \emph{conditioning on the first claim occurence}, the functions $V_l(x)=V_l(x;a,b)$ and $V_u(x)=V_u(x;a,b)$ have to fulfill the equations
\begin{multline}\label{equ:5}
cV_l'(x) - (\delta + \lambda)V_l(x) + \lambda
  \int_0^x V_l(x - y) \alpha e^{-\alpha y} dy \\
  + \beta (V_l(a) -V_l(x) - \phi (a-x))=0,
\end{multline}
\begin{multline}\label{equ:6}
cV_u'(x) - (\delta + \lambda)V_u(x) + \lambda
  \int_0^{x-a} V_u(x - y) \alpha e^{-\alpha y} dy \\
  + \lambda \int_{x-a}^x V_l(x - y) \alpha e^{-\alpha y} dy=0,
\end{multline}
\begin{equation}\label{equ:7}
V_l(a) = V_u(a) \text{ and } V_u'(b)=1.
\end{equation}
We get immediately, using the above equations, that $V(x;a,b)$ is continuously differentiable in $x$.
Furthermore, we obtain that continuity implies differentiability, i.e. $V'_l(a) = V'_u(a)$ if and only if $V_l(a) = V_u(a)$.
This means that the condition $V_l(a) = V_u(a)$ in \eqref{equ:7} is equivalent to the condition $V'_l(a) = V'_u(a)$.\\ 
We use the method of equating coefficients in order to solve the above equations \eqref{equ:5}, \eqref{equ:6} and \eqref{equ:7} explicitely.
This yields functions
\begin{align}
V_l(x;a,b) &\coloneqq A_1(a,b) e^{R_1 x} + A_2(a,b) e^{R_2 x}+A_3(a,b)x+A_4(a,b), \\
V_u(x;a,b) &\coloneqq B_1(a,b) e^{S_1 x} + B_2(a,b) e^{S_2 x},
\end{align}
where $S_1<0<S_2$ are solutions to $c S -(\delta+\lambda) +\frac{\alpha\lambda }{\alpha +S} =0$.
The exponents $R_1<0<R_2$ solve $c R-(\delta+\lambda+\beta) +\frac{\alpha  \lambda }{\alpha +R} =0.$
Note that under our assumptions we have that $S_1 + S_2 < 0$. The coefficients $A_1,\ldots,B_2$ are obtained by a system of five linear equations and do heavily depend on the parameters $a,\,b$.\\

We observe that numerical maximization of the function $V(x;a,b)$ in $a$ and $b$ for fixed $x$ results in levels $a^*$ and $b^*$ which are independent of $x$. These particular levels also coincide with the solutions of second order smooth fit conditions in $a$ and $b$. Additionally, we need to fulfill these second order smooth fit conditions in order to get a function which is twice continuously differentiable. This seems to be superfluous, since the domain of the generator only asks for absolut continuity, but in combination with concavity it serves as a basis for a direct proof of the associated verification theorem.  The conditions read as follows:
\begin{equation}\label{equ:8}
F(a,b)=
	\begin{pmatrix} 
	\frac{\partial^2}{\partial x^2}V_u(x;a,b)\big|_{x=b}\\
	\left[\frac{\partial^2}{\partial x^2}V_l(x;a,b)-\frac{\partial^2}{\partial x^2}V_u(x;a,b)\right]\big|_{x=a}
	\end{pmatrix}
	\stackrel{!}{=}
	\begin{pmatrix} 
	0\\
	0
	\end{pmatrix}.
	\end{equation}
%and yields a twice continuously differentiable function $V(x;a,b)$ in $x$.

In Figure \ref{fig:intersection} the green surface corresponds to $\frac{\partial^2}{\partial x^2}V_u(x;a,b)|_{x=b}$, the 
orange surface corresponds to $\frac{\partial^2}{\partial x^2}V_l(x;a,b)|_{x=a}-\frac{\partial^2}{\partial x^2}V_u(x;a,b)|_{x=a}$ and the blue plane highlights the zero level. The red curves mark the intersection of the two surfaces with the zero level.
\begin{figure}[htb]
\begin{center}
\includegraphics[scale=0.8]{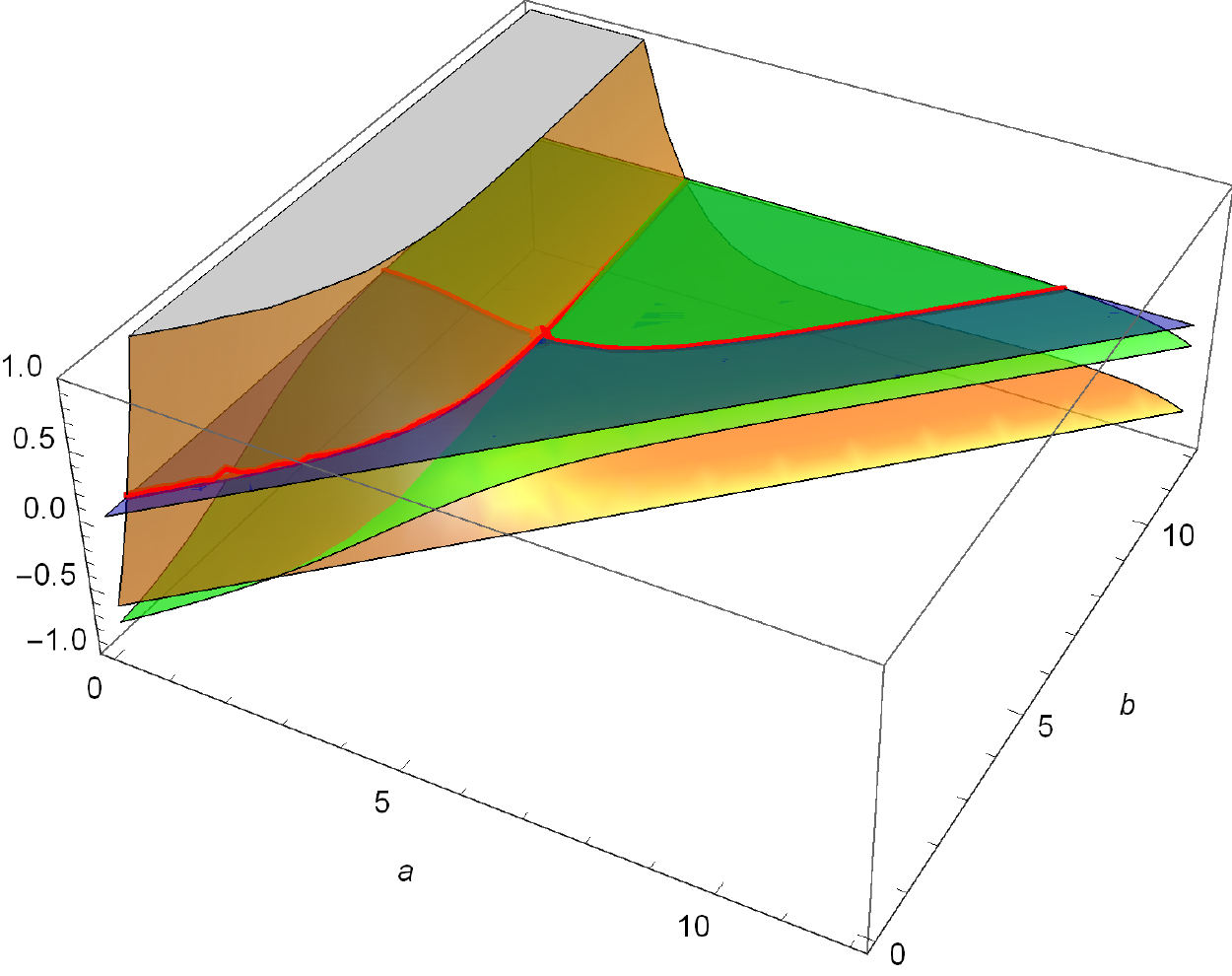}
\caption{Illustration of the problem of finding a solution $(a^*,b^*) $ to \eqref{equ:8}.}\label{fig:intersection}
\end{center}
\end{figure}

Finally, we have to prove the existence of thresholds $a^*$ and $b^*$ such that the smooth fit conditions are fulfilled. In our treatment we are able to derive an interesting condition, which turns out to be equivalent to one of the conditions above.\\
\\
If we differentiate the two integro-differential equations (for $x\in(0,a)$ and $x\in(a,b)$), which characterize $V_l$ and $V_u$, we obtain 
\begin{align*}
cV_l''(x) - (\delta + \lambda + \beta)V_l'(x) +  \lambda \alpha e^{-\alpha x} V_l(0) + \lambda  \int_0^{x} V'_l(x - y) \alpha e^{-\alpha y} dy + \beta \phi =0
\end{align*}
and
\begin{multline*}
cV_u''(x) - (\delta + \lambda)V_u'(x) + \underbrace{\lambda \alpha e^{-\alpha (x-a)} V_u(a) - \lambda \alpha e^{-\alpha (x-a)} V_l(a)}_{=0} \\
+ \lambda  \int_0^{x-a} V'_u(x - y) \alpha e^{-\alpha y} dy + \lambda \alpha e^{-\alpha x} V_l(0) + \lambda  \int_{x-a}^{x} V'_l(x - y) \alpha e^{-\alpha y} dy =0.
\end{multline*}
If we now set $x=a$ and calculate the difference of those two equations, we get using $V'_l(a) = V'_u(a)$ and $V_l(a) = V_u(a)$ that
\begin{equation*}%\label{equ:9}
c(V_l''(a)-V_u''(a))= \beta( V_l'(a) - \phi).
\end{equation*}
Hence we have that $V_l''(a)=V_u''(a)$ if and only if $V_l'(a)= \phi$.\\
\begin{remark}
If we consider the following part of the HJB-equation: $$\sup_{f\geq 0} \{ V_l(x+f) - V_l(x) - \phi f \} ,$$
we obtain for the function $V_l(x)$, if it is concave, that the term inside the supremum is maximal if $V_l'(x+f)=\phi$. This means that using an $a$ such that $a=f+x=V_l'^{[-1]}(\phi)$, yields that the corresponding term ($V_l'(a) - \phi$) in the equation above is zero.
\end{remark}
%On the other hand if we set $x=b=a+h$ for $h>0$ in the second equation we get 
%\begin{multline}
%cV_u''(b)= (\delta + \lambda)\underbrace{V_u'(b)}_{=1} - \lambda \int_0^h V'_u(a+h-y) 
%\alpha e^{-\alpha y} dy - \lambda V_l(0) \alpha e^{- %%\alpha(a+h)}\\
% - \lambda \int_h^{a+h} V'_l(a+h-y) \alpha e^{-\alpha y} dy
%\end{multline}
%So we get that $V_u''(b)=0$ if and only if the right part of the above equation is zero. 

\subsubsection{Extremal behaviour of optimal strategy}

First of all we want to know, whether the optimal strategy uses the additional funding or not, which means that we have to determine the reasons for the choice $a^*=0$. For that purpose we consider the solution of the usual dividend problem, which is well-known in the literature, see \cite{schm},

\begin{equation*}
\tilde{V}(x;b) =
		\begin{cases}
		\frac{h(x)}{h'(b)}, &\text{ if } 0 \leq x \leq b,\\
			x-b +\tilde{V}(b;b)		, &\text{ if } x > b.
		\end{cases}
\end{equation*}
Here
\begin{equation*}
h(x)= e^ {S_1 x} (S_1 + \alpha) - e^{S_2 x}(S_2 + \alpha),
\end{equation*}
where $S_1$ and $S_2$ are the same exponents as before and the optimal barrier ensuring twice continuous differentiability of the value function has the following form
\begin{equation*}
\tilde{b}= \ln\left(\frac{S_2^2 (S_2 + \alpha)}{S_1 ^2 (S_1 + \alpha)}\right)\frac{1}{S_1-S_2}.
\end{equation*}
In this case we have to assume that 
\begin{equation*}
(\delta + \lambda)^2 < c \alpha \lambda,
\end{equation*}
in order to make sure that $\tilde{b}>0$ holds. This yields that in the usual dividend problem $\tilde{V}(x;\tilde{b})$ is the value function.
We have that $\tilde{V}(x;b)= V(x;0,b)$.
\begin{remark}
At first one notices that
$$\tilde{V}'(0;\tilde{b}) = \frac{(S_1-S_2) (\alpha +S_1+S_2)}{S_1 (\alpha +S_1)
\left(\frac{S_2^2 (\alpha +S_2)}{S_1^2 (\alpha+S_1)}\right)^{\frac{S_1}{S_1-S_2}}-S_2 (\alpha +S_2)
\left(\frac{S_2^2 (\alpha +S_2)}{S_1^2 (\alpha+S_1)}\right)^{\frac{S_2}{S_1-S_2}}}\geq 1.$$
A second thought concerns the behaviour of an possibly optimal level $a$ with respect to the parameter $\phi$.
The following is more a heuristic intuition than a rigorous treatment. But if we consider $a=x+f^*=g'^{[-1]}(\phi)$, where $f^*$ is the maximizing argument of the supremum part of the HJB-equation
$$\sup_{f\geq 0} \{g(x+f) - g(x) - \phi f \} $$
as mentioned in the previous remark, we get (considering $a$ as a function of $\phi$, i.e.  $a(\phi)=g'^{[-1]}(\phi)$) that 
\begin{equation*}
a'(\phi)=(g'^{[-1]})'(\phi)=\frac{1}{g''(g'^{[-1]}(\phi))}<0,
\end{equation*}
if we assume that $g$ is concave, which will be true if $g$ is the value function. This means that the lower threshold $a$ is decreasing in $\phi$.
\end{remark}

\begin{proposition}
Let the optimal barrier in the classical dividend problem $\tilde{b}$ be positive.
For the optimal levels $0\leq a^* \leq b^*$ in the dividend problem with random funding we obtain that $a^* = 0$ and $b^*=\tilde{b}$ if and only if $\phi \geq \tilde{V}'(0;\tilde{b})$. 
\end{proposition}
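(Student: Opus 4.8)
The plan is to identify the event $\{a^{*}=0,\ b^{*}=\tilde{b}\}$ with the statement that the classical dividend value function $\tilde{V}(\cdot;\tilde{b})=V(\cdot;0,\tilde{b})$ already solves the full HJB equation \eqref{equ:HJB}, and to show that this holds precisely when $\phi\geq\tilde{V}'(0;\tilde{b})$. The elementary ingredient is that $g:=\tilde{V}(\cdot;\tilde{b})$ is concave, with $g'$ non-increasing on $[0,\tilde{b}]$ and $g'\equiv 1$ on $(\tilde{b},\infty)$, so that $\sup_{x\geq 0}g'(x)=g'(0)=\tilde{V}'(0;\tilde{b})$; writing $g(x+f)-g(x)-\phi f=\int_{0}^{f}(g'(x+s)-\phi)\,ds$ gives
$$\sup_{f\geq 0}\{g(x+f)-g(x)-\phi f\}=0\ \text{ for every }x\geq 0\ \Longleftrightarrow\ \phi\geq\tilde{V}'(0;\tilde{b}),$$
the supremum being attained at $f=0$ in that case and strictly positive at $x=0$ for small $f>0$ otherwise. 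Since $g$ solves the classical integro-differential equation on $[0,\tilde{b}]$ and $g'\geq 1$ throughout, adding this then-vanishing financing term shows that, for $\phi\geq\tilde{V}'(0;\tilde{b})$, $g$ is a concave, twice continuously differentiable solution of \eqref{equ:HJB}.

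For the direction ``$\phi\geq\tilde{V}'(0;\tilde{b})\Rightarrow(a^{*},b^{*})=(0,\tilde{b})$'' I would use that such a solution of \eqref{equ:HJB} coincides with the value function and that the strategy it induces---no financing, dividend barrier at $\tilde{b}$---is optimal, which is exactly the verification statement the smooth-fit construction is designed to yield. Within the band family one can argue equivalently: evaluating \eqref{equ:5} at $x=a$ and using the classical relation $c\tilde{V}'(0;\tilde{b})=(\delta+\lambda)\tilde{V}(0;\tilde{b})$ gives $\lim_{a\downarrow 0}V_l'(a;a,\tilde{b})=\tilde{V}'(0;\tilde{b})$, so, together with concavity of $V_l$ and the identity $c(V_l''(a)-V_u''(a))=\beta(V_l'(a)-\phi)$ derived above, the second-order smooth-fit equation in $a$---namely $V_l'(a)=\phi$---has no root $a>0$ once $\phi\geq\tilde{V}'(0;\tilde{b})$; hence the constrained maximiser has $a^{*}=0$, and for $a=0$ the band family collapses to $\{\tilde{V}(\cdot;b)\}_{b}$, which is maximised at $b=\tilde{b}$.

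The converse ``$(a^{*},b^{*})=(0,\tilde{b})\Rightarrow\phi\geq\tilde{V}'(0;\tilde{b})$'' I would establish by contraposition. If $\phi<\tilde{V}'(0;\tilde{b})$, then by continuity of $\tilde{V}'$ there is $\varepsilon\in(0,\tilde{b}]$ with $\tilde{V}'(s;\tilde{b})>\phi$ on $[0,\varepsilon]$. Comparing, for $0<a\leq\varepsilon$, the band strategy with parameters $(a,\tilde{b})$ to the one with parameters $(0,\tilde{b})$: the two agree except that the former, whenever a funding opportunity appears while the reserve is at a level $x\in[0,a)$, lifts the reserve up to $a$, with associated gain $\tilde{V}(a;\tilde{b})-\tilde{V}(x;\tilde{b})-\phi(a-x)=\int_{x}^{a}(\tilde{V}'(s;\tilde{b})-\phi)\,ds>0$; since the process then restarts from $a$, these gains accumulate and $V(x;a,\tilde{b})>V(x;0,\tilde{b})=\tilde{V}(x;\tilde{b})$, contradicting optimality of $(0,\tilde{b})$. (Equivalently: compute $\partial_{a}V(x;a,\tilde{b})|_{a=0^{+}}$ from \eqref{equ:5}--\eqref{equ:7} and check that it has the sign of $\tilde{V}'(0;\tilde{b})-\phi$.)

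The main obstacle is the passage from ``$g$ solves \eqref{equ:HJB}'' to ``the optimal levels are $(0,\tilde{b})$'' without circularly relying on the verification theorem proved later: one must show directly that no band strategy with $a>0$ outperforms $\tilde{V}(\cdot;\tilde{b})$ when $\phi$ is large, which requires handling $V(\cdot;a,b)$ through its defining equations \eqref{equ:5}--\eqref{equ:7} and, in particular, the limit and monotonicity behaviour of $a\mapsto V_l'(a;a,b)$ invoked above; symmetrically, in the converse direction the delicate part is to make the accumulation of the small financing gains precise, for which the explicit integro-differential representation of the performance function is again the most convenient tool.
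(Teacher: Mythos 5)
Your proposal is correct and follows essentially the same route as the paper: both directions reduce to determining the sign of $\sup_{f\geq 0}\{\tilde{V}(x+f;\tilde{b})-\tilde{V}(x;\tilde{b})-\phi f\}$ via concavity of $\tilde{V}(\cdot;\tilde{b})$, which is nonpositive everywhere iff $\phi\geq\tilde{V}'(0;\tilde{b})=\sup_x \tilde{V}'(x;\tilde{b})$ (your integral form $\int_0^f(\tilde{V}'(x+s;\tilde{b})-\phi)\,ds$ is the paper's Taylor/difference-quotient argument in disguise), with the converse handled by contraposition exactly as in the paper. The ``obstacle'' you flag --- that optimality of $(0,\tilde{b})$ is mediated by the verification theorem proved later --- is present in the paper's proof as well, which explicitly defers to it, so your awareness of it (and the sketched policy-improvement alternative) is a point of care rather than a divergence.
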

This means that the simple barrier strategy is optimal and the solution of the classical dividend problem coincides with the solution of the extended problem.
\begin{proof}
We assume that $\phi \geq \tilde{V}'(0;\tilde{b})$ and have to show that $a^* = 0$ and $b^*=\tilde{b}$ are the optimal thresholds so that $ V(x;0,\tilde{b})=\tilde{V}(x;\tilde{b})$ solves the HJB-equation, is concave and $\mathcal{C}^2$ - the ingredients we later need in the verification theorem.\\
We know that this candidate function is concave, $\mathcal{C}^2$ and solves the HJB-equation of the classical dividend problem with $\beta=0$. Using this, we obtain for  $0\leq x < \tilde{b}$ that
$$1-\tilde{V}'(x;\tilde{b})<0.$$
For the first part of the HJB-equation it remains to show that $$\beta \sup_{f\geq 0} \{\tilde{V}(x+f;\tilde{b})-\tilde{V}(x;\tilde{b})-\phi f\} =0.$$
If we choose $f=0$ the term inside the supremum is zero. Otherwise, if $f>0$ we obtain that
$$\tilde{V}(x+f;\tilde{b})-\tilde{V}(x;\tilde{b})-\phi f<0.$$
This holds true because if we use $\tilde{V}''(x;\tilde{b})<0$ we get
$$\frac{\tilde{V}(x+f;\tilde{b})-\tilde{V}(x;\tilde{b})}{(x+f)-x} \leq \tilde{V}'(x;\tilde{b})< \tilde{V}'(0;\tilde{b})$$
and from $\phi \geq \tilde{V}'(0;\tilde{b})$, the above inequality follows.\\
For the values $x \geq\tilde{b}$ we know that $\tilde{V}'(x;\tilde{b})=1$, and that the first part of the HJB-equation (which coincides with the first part of the classical HJB-equation) is negative. So it remains to check whether
$$\beta \sup_{f\geq 0} \{\tilde{V}(x+f;\tilde{b})-\tilde{V}(x;\tilde{b})-\phi f\} \leq 0$$
holds. But this is true since if we plug in the linear function for $x > \tilde{b}$ we get that $\beta \sup_{f\geq 0} \{ (1-\phi)f\} \leq 0$, since $\phi \geq 1$. The special case $x=\tilde{b}$  is analogue to the situation $x< \tilde{b}$.\\
For the other direction, if $a^* = 0$ and $b^*=\tilde{b}$ are optimal, we have to show $\phi \geq \tilde{V}'(0;\tilde{b})$. For that purpose we assume the opposite, namely that $\phi < \tilde{V}'(0;\tilde{b})$.
But in this case we can exploit the fact that $\tilde{V}'(\tilde{b};\tilde{b})=1$ and $ \tilde{V}''(x;\tilde{b})<0 \ \forall x \in (0,\tilde{b})$, in addition to the above assumption $\tilde{V}'(0;\tilde{b}) > \phi \geq 1$. Putting this  together yields by the intermediate %${(\color{red} mean})
 value theorem that $\exists! \bar{x} \in (0, \tilde{b}): \ \tilde{V}'(\bar{x};\tilde{b})= \phi $. We want to show that $$\beta \sup_{f\geq 0} \{\tilde{V}(x+f;\tilde{b})-\tilde{V}(x;\tilde{b})-\phi f\} >0.$$ Differentiating the inner term and setting it equal to zero yields that $$\tilde{V}'(x+f;\tilde{b})\stackrel{!}{=}\phi,$$ so we obtain that $f^*= \bar{x}-x$ which is positive, if $ x  \in (0, \bar{x})$. Applying Taylor's formula gives for some $\theta \in (x,\bar{x})$ the following 
\begin{align*}
\tilde{V}(x;\tilde{b})=&\tilde{V}(\bar{x};\tilde{b})+ \tilde{V}'(\bar{x};\tilde{b})(x-\bar{x}) + \frac{1}{2}\tilde{V}''(\theta;\tilde{b})(x-\bar{x})^2\\
=&\tilde{V}(\bar{x};\tilde{b})- \phi (\bar{x} -x) + \frac{1}{2}\tilde{V}''(\theta;\tilde{b})(x-\bar{x})^2\\
<&\tilde{V}(\bar{x};\tilde{b})- \phi (\bar{x} -x).
\end{align*}
Finally we obtain that this function does not solve the HJB-equation and  $a^*=0$ being optimal cannot work out, which is a contradiction.
\end{proof}
\begin{comment}
From this lemma we immediately get the following corollary.
\begin{corollary}
Assume that the optimal barrier in the usual dividend problem $\tilde{b}$ is positive. Using the same notation as in the proposition above, it holds that if $0<a^* \leq b^* < \tilde{b}$ then $\phi < \tilde{V}'(0,\tilde{b})$.
\end{corollary}
\begin{proof}
Assume the contrary, namely that  $\phi \geq \tilde{V}'(0,\tilde{b})$, but using the lemma from above this implies that $a^*=0$ and $b^*=\tilde{b}$, which is a contradiction.
\end{proof}
\end{comment}
The above proposition gives us the optimal strategy in the case $\phi \geq \tilde{V}'(0,\tilde{b})$. Furthermore, only in that case the usual dividend barrier strategy is optimal. In the next step we consider the lowest bound for the parameter $\phi$ where a non-trivial strategy appears namely $\phi = 1$.
\begin{lemma}
Let $\tilde{b}$ be positive. For the optimal levels $a^* \leq b^*$ for the dividend problem with random funding we obtain that $a^* = b^*$ if  $\phi =1$. 
\end{lemma}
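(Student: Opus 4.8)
The plan is to show that, when $\phi=1$, the system characterising the optimal levels forces the two free boundaries to coincide, so no genuine ``upper'' band $[a,b]$ survives. First I would collect the equations that any optimal pair $(a^*,b^*)$ must satisfy. Since the performance function $V(\,\cdot\,;a^*,b^*)$ obeys \eqref{equ:5}--\eqref{equ:7} and (as observed numerically and used throughout) the optimal levels solve the smooth-fit system \eqref{equ:8}, we have $V_u'(b^*)=1$ and $V_u''(b^*)=0$, together with $V_l(a^*)=V_u(a^*)$ -- equivalently $V_l'(a^*)=V_u'(a^*)$. Moreover, the second component of \eqref{equ:8} combined with the identity $c\big(V_l''(a)-V_u''(a)\big)=\beta\big(V_l'(a)-\phi\big)$ established above is equivalent to $V_l'(a^*)=\phi$, which for $\phi=1$ reads $V_l'(a^*)=1$. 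Putting these together yields $V_u'(a^*)=V_u'(b^*)=1$ while $V_u''(b^*)=0$.

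Second, I would exploit the explicit form $V_u(x;a^*,b^*)=B_1e^{S_1x}+B_2e^{S_2x}$ with $S_1<0<S_2$. Then $V_u''(x)=B_1S_1^2e^{S_1x}+B_2S_2^2e^{S_2x}$ is a linear combination of two exponentials with distinct exponents. Such a function either vanishes identically -- which would force $V_u$ to be constant and hence contradict $V_u'(b^*)=1$ -- or has at most one real zero, because $V_u''(x)=0$ rearranges (when $B_1\neq0$) to $e^{(S_1-S_2)x}=-B_2S_2^2/(B_1S_1^2)$, an equation with at most one solution, and when $B_1=0$ it has no zero at all unless $B_2=0$ too. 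So $V_u''$ has at most one zero.

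Third comes the contradiction. Suppose $a^*<b^*$. Since $V_u'(a^*)=V_u'(b^*)$, Rolle's theorem produces some $\xi\in(a^*,b^*)$ with $V_u''(\xi)=0$; together with $V_u''(b^*)=0$ this exhibits two distinct zeros $\xi\neq b^*$ of $V_u''$, contradicting the previous step. Hence $a^*=b^*$, and the band strategy collapses to a single barrier. (One should then also note that the reduced one-boundary problem is consistent: $V_l$ on $[0,a^*]$ solving \eqref{equ:5} with $\phi=1$, matched to the linear piece beyond $a^*$, with $V_l'(a^*)=1$ and $V_l''(a^*)=0$ playing the roles of \eqref{equ:7} and of the first component of \eqref{equ:8}.)

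Since the core of the argument is this short exponential-sum observation, I expect the only point needing care to be the bookkeeping in the first step: faithfully translating each row of \eqref{equ:8}, and the built-in matching \eqref{equ:7}, into conditions on $V_u'$ and $V_u''$, and relying on the identification of the optimal levels with the solutions of \eqref{equ:8}. As an alternative one could argue via concavity -- $V'$ would be non-increasing with $V'\geq1$ and $V'(a^*)=\phi=1$, forcing $V'\equiv1$ on $[a^*,\infty)$ and hence $b^*=a^*$ -- but since concavity of the candidate is established only later in the verification step, the Rolle argument above is the self-contained route here.
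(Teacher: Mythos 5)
Your Rolle/exponential-sum observation is correct as far as it goes, and it is in fact a neat alternative derivation of the fact that the gap $b^*-a^*$ must close at $\phi=1$ (the paper obtains the same conclusion later, in the moderate-$\phi$ theorem, from equation \eqref{equ:10}: $H(0)=(\phi-1)(S_2-S_1)=0$ exactly when $\phi=1$). But the proof rests on a premise that is nowhere established at this point, namely that \emph{the optimal levels} satisfy the second-order smooth-fit system \eqref{equ:8} together with \eqref{equ:7}. In the paper this identification is only a numerical observation; the rigorous link runs in the opposite direction: one \emph{constructs} a pair satisfying smooth fit, checks that the resulting performance function is $\mathcal{C}^2$, concave and solves the HJB-equation \eqref{equ:HJB}, and only then concludes via the verification theorem that these levels are optimal. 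Starting from ``the optimal $(a^*,b^*)$ solves \eqref{equ:8}'' therefore assumes the characterisation you are supposed to be proving. (Your alternative concavity argument has the same circularity, as you note yourself.)

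Because of this, your proof also omits what is actually the substance of the paper's argument: an \emph{existence} and \emph{verification} step. The paper sets $a=b$, defines $M(a)=V_l''(a;a,a)$, shows $M(0)=\bigl((\delta+\lambda)^2-c\alpha\lambda\bigr)/\bigl(c(\delta+\lambda)\bigr)<0$ under the standing assumption and $\lim_{a\to\infty}M(a)=R_2\,\delta/(\beta+\delta)>0$, so a root $a^*$ exists by continuity; it then uses the signs of $A_1(a^*,a^*)<0$ and $A_2(a^*,a^*)>0$ to get $V_l'''>0$, hence concavity on $[0,a^*]$ and $V_l'>1$ there, so that the glued function is a concave $\mathcal{C}^2$ solution of the HJB-equation with $\phi=1$ and the verification theorem applies. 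All of this is compressed in your parenthetical remark that ``the reduced one-boundary problem is consistent,'' which is precisely the part that needs proof. To repair the argument you would either have to prove a priori that any optimal band levels satisfy second-order smooth fit (not done in the paper), or follow the constructive route and carry out the existence and concavity analysis for the degenerate band $a=b$.
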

In this case we are in the following situation, if an investor occurs we generate external funding to such an extent that we arrive with the surplus process at the dividend barrier, which triggers dividend payments. Hence, there is no gap between the dividend barrier and the funding level.
\begin{proof}
Solving the above equations for $a=b$, we get the function $V_l(x;a,a)$.
It remains to prove the existence of $a^*=b^*$, resulting in $V_l(x;a^*,b^*)$, such that the assumptions of the verification theorem are fulfilled.
At this point we know that $V'_l(a;a,a)=1$ and we have to find $a$ such that the smooth fit condition is fulfilled:
$M(a)\coloneqq V''_l(a;a,a)\stackrel{!}{=}0$.\\
Evaluating this function at $a=0$ yields
$M(0)=\frac{(\delta+\lambda)^2 - c\alpha \lambda}{c(\delta+\lambda)},$
which is negative according to our assumptions. Otherwise, we would have a value function of the form
$V(x)=x+\frac{c}{\delta+\lambda}$ as treated in the corresponding lemma above.
Furthermore, if $M(0)=0$ we obtain that $a^*=b^*=0$. This is also in line with the just mentioned case of a linear value function.\\
On the other hand we know that
$$M(a)=V''_l(a;a,a)= A_1(a,a) e^{R_1 a} + A_2(a,a) e^{R_2 a},$$
is continuous and $\lim\limits_{a \to \infty} M(a)=R_2 \frac{\delta}{\beta+\delta}$ is strictly positive.
This yields that there exists an $a^*$ such that $M(a^*)=0$. If there would be more than one point, such that the smooth fit conditions are fulfilled,
we decide to choose the smallest one.
At this $a^*>0$ we are able to exploit the equations $V'_l(a^*;a^*,a^*)=1$ and $V''_l(a^*;a^*,a^*)=0$ to get that $A_1(a^*,a^*)<0$ and $A_2(a^*,a^*)>0$. This yields that $V'''_l(x;a^*,a^*)> 0$ for all $ x \geq 0$, moreover together with $V''_l(a^*;a^*,a^*)=0$ we obtain that $V''_l(x;a^*,a^*)< 0$ for $0\leq x< a^*$. Which in turn implies that $V'_l(x;a^*,a^*)> 1$  for $0\leq x< a^*$. So the obtained function 
\begin{equation}
V(x)=V(x;a^*,a^*) =
\begin{cases}
V_l(x;a^*,a^*), &\text{ if } 0 \leq x \leq a^*,\\
x- a^* + V(a^*;a^*,a^*), &\text{ if } x > a^*.
\end{cases}
\end{equation}
is twice continuously differentiable and concave, in addition to that it fulfills the HJB-equation with $\phi=1$.
Altogether, this enables us to apply the verification theorem.
\end{proof}
%{\color{red} \underline{Remark:} If we consider the behaviour of the optimal threshold $a^*$ under variation of the parameter $\beta$, which denotes the intensity with which the investors appear, we observe that letting $\beta \searrow 0$ the optimal threshold tends towards $\tilde{b}$ and the value function towards $\tilde{V}(x;\tilde{b})$, which is the value function in the case without outside financing opportunity.\\
%
%
%Show: $\phi =1  \Leftrightarrow a^*=b^* $ \\
%$\phi =1  \Rightarrow a^*=b^* \checkmark $\\
%$ a^*=b^* \Rightarrow \phi =1 ? $}
\subsubsection{The case of moderate $\phi$}
Up to now we have investigated the following cases and obtained in each case the optimal combined strategy:
\begin{itemize}
\item  $\phi\geq \tilde{V}'(0;\tilde{b}) \Rightarrow a^*=0$ and $b^*=\tilde{b}$,
\item $\phi=1 \Rightarrow a^*=b^*$.
\end{itemize}
In this section we fill the missing gaps in order to have an optimal solution for every admissible value of $\phi\geq 1$.
\begin{theorem}
For $1<\phi<\tilde{V}'(0;\tilde{b})$ and $\tilde{b}>0$, we have that there exist $0<a^*<b^*$ such that the smooth fit conditions \eqref{eq:smoothfit1} and \eqref{eq:smoothfit2} are satisfied. The resulting function $V(x;a^*,b^*)$ from \eqref{eq:funcstructure} is a twice differentiable and concave solution to the HJB-equation \eqref{equ:HJB}.
\end{theorem}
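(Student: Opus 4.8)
The plan is to split the claim into two tasks: first produce $0<a^*<b^*$ solving the smooth-fit system \eqref{equ:8}, then verify that the function $V(\cdot;a^*,b^*)$ from \eqref{eq:funcstructure} is $\mathcal C^2$, concave and a solution of \eqref{equ:HJB}. As a preliminary, I would record that, by the computations carried out just before the statement, the two entries of \eqref{equ:8} are equivalent to the pair
\begin{equation*}
V_u''(b;a,b)=0 \qquad\text{and}\qquad V_l'(a;a,b)=\phi ,
\end{equation*}
the second equivalence coming from $c\bigl(V_l''(a)-V_u''(a)\bigr)=\beta\bigl(V_l'(a)-\phi\bigr)$ together with the automatic $\mathcal C^1$-pasting at $a$; so it is enough to solve this pair with $0<a^*<b^*$.

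For existence, fix $\phi\in\bigl(1,\tilde V'(0;\tilde b)\bigr)$. Because $A_1,\dots,B_2$ solve a linear system whose data depend continuously on $(a,b)$, both $V_u''(b;a,b)$ and $V_l'(a;a,b)$ are continuous in $(a,b)$. I would first solve $V_u''(b;a,b)=0$ for $b$: arguing as in the classical construction of $\tilde b$ (where $V_u(\cdot;0,b)=\tilde V(\cdot;b)$), one obtains for each admissible $a$ a unique $b=b(a)\ge a$ with $V_u''(b(a);a,b(a))=0$, with $a\mapsto b(a)$ continuous and $b(0)=\tilde b$. Set $\Psi(a):=V_l'(a;a,b(a))$; this is continuous, with $\Psi(0^+)=\tilde V'(0;\tilde b)>\phi$ by hypothesis, while for $a$ large $\Psi(a)$ drops strictly below $\phi$ (as the funding region $[0,a)$ widens the reserve is kept away from ruin and a marginal unit of capital loses value; heuristically the constant term $A_3=\tfrac{\beta\phi}{\delta+\beta}$ of $V_l$ eventually dominates $V_l'(a)$, and $\tfrac{\beta\phi}{\delta+\beta}<\phi$). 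By the intermediate value theorem there is $a^*>0$ with $\Psi(a^*)=\phi$; put $b^*:=b(a^*)<\infty$, choosing the smallest such $a^*$ if there are several, exactly as in the case $\phi=1$. Finally $a^*<b^*$, for if $a^*=b^*$ then $V_u'(a^*)=V_u'(b^*)=1$ and the $\mathcal C^1$-pasting at $b^*$ forces $V_l'(a^*)=1\ne\phi$, a contradiction.

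It remains to check the properties of $V:=V(\cdot;a^*,b^*)$. It is $\mathcal C^1$ by construction and $\mathcal C^2$ since $V_u''(b^*)=0$ matches the vanishing second derivative of the linear branch while $V_l''(a^*)=V_u''(a^*)$ is exactly the relation $V_l'(a^*)=\phi$. Concavity follows as in the $\phi=1$ lemma: from $V_u'(b^*)=1$ and $V_u''(b^*)=0$ a $2\times 2$ system gives $B_1<0<B_2$, hence $V_u'''>0$ and $V_u''<0$ on $[a^*,b^*)$, in particular $V_u''(a^*)<0$; then the analogous $2\times 2$ system built from $V_l'(a^*)=\phi$ and $V_l''(a^*)=V_u''(a^*)<0$ gives $A_1<0$, whence $V_l''<0$ on $[0,a^*]$. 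So $V$ is $\mathcal C^2$ and concave on $[0,\infty)$ with $V'$ nonincreasing, $V'(b^*)=1$, $V'(a^*)=\phi$, so $V'\ge 1$ everywhere and $V'\ge\phi$ on $[0,a^*]$. Now I verify \eqref{equ:HJB} branchwise: on $(0,a^*)$ concavity makes the inner supremum attained at $f=a^*-x\ge 0$, so the first entry equals the left side of \eqref{equ:5}, namely $0$, while $1-V'(x)\le 1-\phi<0$; on $(a^*,b^*)$ one has $V'(x)\le\phi$, so the supremum is attained at $f=0$ and the first entry equals the left side of \eqref{equ:6}, namely $0$, while $1-V'(x)\le 0$; on $(b^*,\infty)$, $1-V'(x)=0$, the supremum equals $\sup_{f\ge 0}(1-\phi)f=0$, and — exactly as for the classical barrier strategy (cf. \cite{schm}) — a direct computation, using that the relevant expression and its derivative vanish at $b^*$ and that $V'\equiv 1$, $V''\equiv 0$ and $V$ is increasing on $(b^*,\infty)$, shows the first entry stays $\le 0$. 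Thus $V$ solves \eqref{equ:HJB} and is $\mathcal C^2$ and concave, which are precisely the ingredients needed in the verification theorem.

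The main obstacle is the existence argument: one must control the coefficients $A_i,B_i$, defined through an explicit but cumbersome linear system, as functions of $(a,b)$ — in particular keep that system non-degenerate over the relevant range — so that $b(\cdot)$ and $\Psi$ are well defined and continuous, and one must pin down precisely both the value $\Psi(0^+)=\tilde V'(0;\tilde b)$ and the fact that $\Psi$ eventually falls below $\phi$. (A useful consistency check is that continuity in $\phi$ should link the two ends of the construction to the already-settled cases $\phi=\tilde V'(0;\tilde b)$ and $\phi=1$.) The sign determination for the coefficients in the concavity step is a secondary delicate point.
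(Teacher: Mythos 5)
Your reduction of the smooth-fit system to the pair $V_u''(b)=0$, $V_l'(a)=\phi$ is correct and matches the paper, and your concavity/HJB verification follows the paper's lines (modulo the case distinction on the sign of $A_2$, which you skip, and the computation showing the first HJB branch is negative for $x>b^*$, which you only cite by analogy --- the paper actually needs the closed forms of $B_1,B_2$ to obtain $q'(x)=(e^{(b^*-x)\alpha}-1)\delta<0$ there). The genuine gap is in the existence argument. You propose to solve $V_u''(b;a,b)=0$ for $b=b(a)$ and then apply the intermediate value theorem to $\Psi(a)=V_l'(a;a,b(a))$, but the two facts this hinges on --- that $b(a)$ exists, is unique and continuous for every admissible $a$, and that $\Psi(a)$ eventually falls below $\phi$ --- are exactly the statements you leave unproved; the second you explicitly label a heuristic. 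Since $B_1,B_2$ (and hence $V_u''(\cdot\,;a,b)$) are determined by the full five-equation linear system coupling the two branches, neither claim follows ``as in the classical construction of $\tilde b$'', and without them the intermediate value theorem cannot be invoked.

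The paper avoids both difficulties by a different, essentially algebraic reduction: combining $V_u'(b)=1$, $V_u'(a)=\phi$ and $V_u''(b)=0$ for the two-exponential form of $V_u$ eliminates $B_1,B_2$ entirely and yields the scalar equation $\phi(S_2-S_1)+S_1e^{-S_2 h}-S_2e^{-S_1 h}=0$ in the gap $h=b-a$ alone. Monotonicity of the left-hand side gives a unique root $\bar h$, and the hypothesis $1<\phi<\tilde V'(0;\tilde b)$ translates exactly into $0<\bar h<\tilde b$. Only then is a single intermediate value argument performed, in $a$, on $a\mapsto V_u''(a+\bar h;a,a+\bar h)$, with both endpoint signs computed explicitly (negative at $a=0$ precisely because $\bar h<\tilde b$; positive limit $C(\bar h)$ as $a\to\infty$). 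To complete your route you would have to supply the implicit-function analysis for $b(\cdot)$ and the large-$a$ asymptotics of $\Psi$; the paper's change of variables to $(a,h)$ is the device that makes the problem one-dimensional and explicit, so I recommend adopting it.
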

\begin{proof}
Obviously, we have to solve the equations related to our band strategy for $1<\phi< \tilde{V}(0,\tilde{b})$.
This leads to a solution heavily depending on $a\leq b$ as in the previous case and it remains to choose the values $a$ and $b$ such that the equivalent smooth fit conditions are fulfilled, which are restated here
\begin{align}
V'_u(a)=B_1(a,b) S_1 e^{S_1 a} + B_2(a,b) S_2 e^{S_2 a}\stackrel{!}{=}\phi,\label{eq:smoothfit1}\\
V''_u(b)=B_1(a,b) S_1^2 e^{S_1 b} + B_2(a,b) S_2^2 e^{S_2 b}\stackrel{!}{=}0.\label{eq:smoothfit2}
\end{align}
Anyway, the coefficients $B_1(a,b)$ and $B_2(a,b)$ are fixed such that
\begin{equation}
V'_u(b)=B_1(a,b) S_1 e^{S_1 b} + B_2(a,b) S_2 e^{S_2 b}=1,
\end{equation}
holds true. Transforming this equation twice, leads to
\begin{align*}
B_1(a,b) S_1^2 e^{S_1 b}=S_1(1- B_2(a,b) S_2 e^{S_2 b}),\\
B_1(a,b) S_1 e^{S_1 a}=(1- B_2(a,b) S_2 e^{S_2 b})e^{S_1(a- b)}.
\end{align*}
Now we insert this expression into the equations \eqref{eq:smoothfit1} and \eqref{eq:smoothfit2} and obtain
\begin{align*}
B_2(a,b) S_2(e^{S_2 a} - e^{S_2 b +S_1 (a-b)})\stackrel{!}{=} \phi - e^{S_1 (a-b)},\\
B_2(a,b) S_2(e^{S_2 a} - e^{S_2 b +S_1 (a-b)})\stackrel{!}{=}\frac{(e^{S_2 a} - e^{S_2 b +S_1 (a-b)})S_1}{e^{S_2 b}(S_1-S_2)}.
\end{align*}
Note that according to our assumptions we have $S_1\neq S_2$. Combining those equations and rearranging terms results in
\begin{equation}\label{equ:10}
0\stackrel{!}{=} \phi (S_2-S_1)+S_1 e^{S_2 (a-b)}-S_2 e^{S_1 (a-b)}.
\end{equation}
For $h\geq0$ define $H(h):= \phi (S_2-S_1)+S_1 e^{-S_2 h}-S_2 e^{-S_1 h}$, then we have that $H(0)= ( \phi -1)(S_2-S_1)>0$, since $\phi>1$, $\lim\limits_{h\to \infty} H(h) = -\infty$ and $H'(h)<0$ for $h>0$, which means that there exists a unique $\bar{h}>0$ such that the equation is fulfilled.\\
Further it holds that if $1<\phi<\tilde{V}'(0,\tilde{b})$ then $0< \bar{h}< \tilde{b}$. Namely if there would exist an $h\geq \tilde{b}>0$ such that $H(h)=0$ then we would have 
$$\phi = \frac {S_2 e^{-S_1 h}-S_1 e^{-S_2 h}}{(S_2-S_1)}\geq  \frac {S_2 e^{-S_1 \tilde{b}}-S_1 e^{-S_2 \tilde{b}}}{(S_2-S_1)} = \tilde{V}'(0,\tilde{b}),$$ since the term on the left hand side of the inequality is strictly monotonically increasing in $h$ for all $h>0$. But this is a contradiction to the assumption for $\phi$. On top of this note that if $\phi=1$ or $\phi=\tilde{V}'(0,\tilde{b})$ then we obtain that $\bar{h}=0$ or $\bar{h}= \tilde{b}$ respectively, which is in line with the former investigations.\\
Finally, it remains to prove that for this given $\bar{h}$ there exists an $a$ such that $$V''_u(a+\bar{h};a,a+\bar{h})=B_1(a,a+\bar{h}) S_1^2 e^{S_1 (a+\bar{h})} + B_2(a,a+\bar{h}) S_2^2 e^{S_2 (a+\bar{h})}\stackrel{!}{=}0.$$
For this purpose we plug $\phi = \frac {S_2 e^{-S_1 \bar{h}}-S_1 e^{-S_2 \bar{h}}}{(S_2-S_1)}$ into this equation in order to work with the correct value for $h$. We obtain that 
$$V''_u(a+\bar{h};a, a+\bar{h})\Big|_{a=0}=\frac{S_1^2 e^{\bar{h} S_1} (\alpha +S_1)-S_2^2 e^{\bar{h} S_2} (\alpha +S_2)}{S_1 e^{\bar{h} S_1} (\alpha +S_1)-S_2 e^{\bar{h} S_2} (\alpha +S_2)}<0,$$
since $\bar{h}<\tilde{b}$. On the other hand, if we let $a$ tend to infinity we get that
$$\lim\limits_{a\to \infty}V''_u(a+\bar{h};a, a+\bar{h}):=C(\bar{h})>0.$$
This holds true since
$$C(\bar{h})=\frac{\alpha  P (\beta +\delta )+W S_1 S_2^2-e^{\bar{h} (S_1-S_2)} \left(\alpha  Q (\beta +\delta )+W S_1^2 S_2\right)}{\alpha  (\beta +\delta ) \left(\frac{P}{S_2}-e^{\bar{h} (S_1-S_2)}\frac{Q }{S_1}\right)},$$
where
\begin{align*}
P:=&S_2^2 (\alpha +S_2 ) (\alpha  \delta  (R_2-S_1) (\alpha +R_2+S_1)-\beta  S_1 (\alpha +R_2) (\alpha +S_1)),\\
Q:=&S_1^2 (\alpha +S_1) (\alpha  \delta  (R_2-S_2) (\alpha +R_2+S_2)-\beta  S_2 (\alpha +R_2) (\alpha +S_2)),\\
W:=&\beta (\alpha +S_1) (\alpha +S_2)
\left(\alpha ^2 (\beta +\delta )-\alpha  c R_2 (\alpha +R_2)+R_2^2 (\delta +\lambda )+\alpha R_2 (\beta +\delta +\lambda )\right).
\end{align*}
If we interpret $C$ as a function in $\bar{h}$ and evaluate it in zero, we see that $C(0)=\frac{\delta}{\beta+\delta}R_2>0$.
Using this and the continuity of $C(\bar{h})$ we obtain that $\exists\,\epsilon>0$ such that $C(\epsilon)>0$.
On top of this it even holds that $\frac{\partial}{\partial \bar{h}} C(\bar{h})=0$, which  implies that $C(\bar{h})>0$ for all $\bar{h}\geq 0$. Note that the denominator of $C(\bar{h})$ is strictly positive.\\
Finally, since the limit is positive, there exists an $a^*$ such that $V''_u(a^*+\bar{h})=0$, moreover note that $a^*>0$ since $\bar{h}<\tilde{b}$, if $\bar{h}=\tilde{b}$ then $a^*=0$. If there exists more than one $a^*$ such that this identity holds we decide to choose the smallest one, due to the shape of the function $V_l$. At this point, for $1<\phi< \tilde{V}'(0,\tilde{b})$, we know there exists $a^*>0$ and $b^*:=a^*+ \bar{h}>a^*$ such that the second order smooth fit conditions are fulfilled. Hence, the function
\begin{equation}\label{equ:11}
V(x;a^*,b^*) =
\begin{cases}
V_l(x;a^*,b^*), &\text{ if } 0 \leq x \leq a^*,\\
V_u(x;a^*,b^*), &\text{ if } a^* \leq x\leq b^*,\\
x-b^* + V(b^*;a^*,b^*)		, &\text{ if } x > b^*
\end{cases}
\end{equation}
is twice continuously differentiable.\\
As a next step we have to make sure that indeed our constructed function solves the HJB-equation and is concave.\\
First of all we obtain that for the coefficients of $V_u(x;a^*,b^*)$ it holds that $B_1(a^*,b^*)<0$ and $B_2(a^*,b^*)>0$. This is valid, since if we plug the equation
\begin{align*}
B_1(a^*,b^*) S_1^2 e^{S_1 b^*}=S_1(1- B_2(a^*,b^*) S_2 e^{S_2 b^*})
\end{align*}
into the equation $V''_u(b^*;a^*,b^*)=0$ and rearrange some terms, we obtain that $B_2(a^*,b^*)>0$, $B_1(a^*,b^*)<0$ follows analogously. This directly implies that $V'''_u(x;a^*,b^*)>0$ for all $x \geq 0$, together with $V''_u(b^*;a^*,b^*)=0$ we get that $V''_u(x;a^*,b^*)<0$ for all $x< b^*$, and this together with $V'_u(b^*;a^*,b^*)=1$ yields that $V'_u(x;a^*,b^*)>1$ for all $a^*\leq x < b^*$.\\
Furthermore, the first coefficient of $V_l(x;a^*,b^*)$ satisfies that $A_1(a^*,b^*)<0$, since we can use the identity
\begin{align*}
A_2(a^*,b^*) R_2^2 e^{R_2 a^*}=R_2(\phi - A_1(a^*,b^*) R_1 e^{R_1 a^*}-A_3(a^*,b^*)),
\end{align*}
in order to derive from the inequality $V''_l(a^*;a^*,b^*)=V''_u(a^*;a^*,b^*)<0$ that $A_1(a^*,b^*)<0$. Knowing that $A_3(a^*,b^*)>0$,
we distinguish between the following cases, namely if $A_2(a^*,b^*)<0$ then $V''_l(x;a^*,b^*)<0$ for all $0\leq x \leq a^*$ and this property together with $V'_l(a^*;a^*,b^*)=\phi>1$ yields that $V'_l(x;a^*,b^*)\geq \phi >1$ for all $0\leq x \leq a^*$.\\
If $A_2(a^*,b^*)>0$, then $V'''_l(x;a^*,b^*)>0$ for all $0\leq x \leq a^*$, which implies that $V''_l(x;a^*,b^*)<0$ for all $0\leq x \leq a^*$, since $V''_l(a^*;a^*,b^*)=V''_u(a^*;a^*,b^*)<0$.
Now the concavity together with $V'_l(a^*;a^*,b^*)=\phi>1$ yields that $V'_l(x;a^*,b^*)\geq \phi >1$ for all $0\leq x \leq a^*$.
Additionally, we can deduce that $V(x;a^*,b^*)>\frac{\phi c}{\delta+\lambda}>0$.
This can be shown as already done in Lemma \ref{lem1}, just by using the equation for $V_l(x;a^*,b^*)$ in $x=0$ and exploiting that
$\beta(V_l(a^*;a^*,b^*)-V_l(0;a^*,b^*)-\phi a^*)>0$, due to concavity and $V'_l(a^*;a^*,b^*)=\phi$.\\
In the end, if we insert the function $V(x;a^*,b^*)$ into the HJB-equation we obtain that for $x \in [0,a^*]$ the first part of the HJB-equation is zero and the second part is less than zero. For $x \in (a^*,b^*]$ the same holds true, since the supremum term in the first part is zero, because
\begin{align*}
V(x+f;a^*,b^*) =& V(x;a^*,b^*) + V'(x;a^*,b^*)f + V''(\theta;a^*,b^*)\frac{1}{2} f^2
<&V(x;a^*,b^*)+ \phi f
\end{align*}
holds true, for a $\theta \in (x,x+f)$, provided that $f>0$, otherwise if $f=0$ the supremum part is also zero.\\
For $x>b^*$ we have to show that the second part of the HJB-equation is zero and the first part is less than zero.
%Then this yields that $V(x;a^*,b^*)$ solves the HJB-equation.
For that reason we consider the function 
\begin{multline*}
q(x):= c V'(x;a^*,b^*)- (\lambda +\delta) + \lambda \int\limits_0^x V(x-y;a^*,b^*)\alpha e^{- \alpha y} dy\\ + \beta \sup_{f\geq 0} \{ V(x+f;a^*,b^*) - V(x;a^*,b^*) - \phi f\},\quad x>b^*.
\end{multline*}
We have to show that $q(x)<0$ for all  $x > b^*$. We already know that $q(b^*)=0$ and that the supremum part in $q(x)$ is zero,
since $V(x;a^*,b^*)$ is linear for $x>b^*$ and $\phi>1$.
Furthermore, we use the properties of the coefficients of $V_l(x;a^*,b^*)$ and $V_u(x;a^*,b^*)$. Together with the smooth fit conditions \eqref{eq:smoothfit1} and \eqref{eq:smoothfit2} we get, surprisingly nice,
$$B_1(a^*,b^*)=\frac{S_2\,\phi\,e^{b^* S_2}}{S_1 S_2 e^{a^* S_1 +b^* S_2}-S^2_1  e^{a^* S_2+b^* S_1 }},$$
$$B_2(a^*,b^*)=\frac{S_1\,\phi\,e^{b^* S_1 }}{S_1  S_2  e^{a^* S_2+b^* S_1}-S_2^2 e^{a^* S_1+b^* S_2}}.$$
In addition to that, we use the identity for $\phi$ given in \eqref{equ:10} to obtain 
\begin{equation}
q'(x)= (e^{(b^*-x)\alpha}-1)\delta<0,\;\text{for all}\;x>b^*.
\end{equation}
Finally, this yields that $q(x)<0$ for $x>b^*$, which verifies that $V(x;a^*,b^*)$ satisfies the first part of the HJB-equation for $x>b^*$. Obviously, the function $V(x;a^*,b^*)$ satisfies $1-V'(x;a^*,b^*)=0$ for $x>b^*$ and this shows that $V(x;a^*,b^*)$ solves the second part of the HJB-equation. Overall, this means that the function specified in \eqref{equ:11} solves the HJB-equation \eqref{equ:HJB}.
\end{proof}

\section{Verification Theorem}
Here we state a \emph{verification theorem} which fits to our constructed function $V(x;a^*,b^*)$ in \eqref{equ:11}.
\begin{theorem}
Let $g \in \mathcal{C}^2(0,\infty)$ be a positive solution to the HJB-equation
\begin{multline*}
\max \bigg\{ c(x) g'(x) - (\lambda + \delta) g(x) + \lambda \int_0^x g(x-y) dF_Y (y)\\ +  \beta \sup_{f\geq 0} \{ g(x+f) - g(x)-  \phi f \} , 1- g'(x)\bigg\} =0.
\end{multline*} 
We set $g(x)=0$, if $x<0$. Further let $g$ be concave, then
$$g(x) \geq V(x),$$
where $$V(x) = \sup_{(L,f)\in\Theta} \E_x\left[ \int_0^{\tau^{L,f}} e^{-\delta t} dL_t - \phi \int_0^{\tau^{L,f}} e^{-\delta t} f_t dB_t \right ]$$
and $\tau^{L,f}=\inf\{t\geq 0 | X^{L,f}_t <0\}$.
\end{theorem}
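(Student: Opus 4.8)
\emph{Proof plan.} I would follow the classical verification route: show that along every admissible pair the $g$-value of the controlled reserve, corrected by the net cash already paid out, is a supermartingale up to a local martingale, and then pass to the limit.

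\textbf{Step 1: change of variables.} Fix $x\ge0$ and $(L,f)\in\Theta$, write $X=X^{L,f}$, $\tau=\tau^{L,f}$, and split $L=L^c+L^d$ into its continuous and pure-jump parts. Because $g\in\mathcal C^2(0,\infty)$ and, by the Remark after the model setup, the jump epochs of $N$, $B$ and $L$ are almost surely disjoint, I would apply the change-of-variables (Dynkin) formula for finite-variation semimartingales with jumps to $t\mapsto e^{-\delta(t\wedge\tau)}g(X_{t\wedge\tau})$ and add the realised cash flow $\int_0^{t\wedge\tau}e^{-\delta s}\,dL_s-\phi\int_0^{t\wedge\tau}e^{-\delta s}f_s\,dB_s$. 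Compensating the jumps of $N$ and $B$ splits the outcome into a local martingale $M$ (null at $0$), a drift integral $\int_0^{t\wedge\tau}e^{-\delta s}\mathcal D(X_s,f_s)\,ds$ with
\[
\mathcal D(x,f)=cg'(x)-(\lambda+\delta)g(x)+\lambda\int_0^{x}g(x-y)\,dF_Y(y)+\beta\big(g(x+f)-g(x)-\phi f\big),
\]
a continuous-dividend term $\int_0^{t\wedge\tau}e^{-\delta s}\big(1-g'(X_s)\big)\,dL^c_s$, and a jump-of-$L$ sum $\sum_{0<s\le t\wedge\tau}e^{-\delta s}\big(g(X_s-\Delta L_s)-g(X_s)+\Delta L_s\big)$.

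\textbf{Step 2: every term is controlled by the HJB equation.} Since $g$ solves \eqref{equ:HJB}, replacing $g(x+f)-g(x)-\phi f$ by $\sup_{f'\ge0}\{g(x+f')-g(x)-\phi f'\}$ only increases $\mathcal D(x,f)$, and the result is exactly the first entry of the maximum in \eqref{equ:HJB}, hence $\le0$; therefore $\mathcal D(X_s,f_s)\le0$. By Lemma \ref{lem1}, $g'\ge1$, so $(1-g'(X_s))\,dL^c_s\le0$; and at a jump of $L$ admissibility forces $X_s-\Delta L_s\ge0$, whence $g(X_s-\Delta L_s)-g(X_s)=-\int_{X_s-\Delta L_s}^{X_s}g'(u)\,du\le-\Delta L_s$, so every summand in the jump sum is $\le0$. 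Collecting everything,
\[
e^{-\delta(t\wedge\tau)}g(X_{t\wedge\tau})+\int_0^{t\wedge\tau}e^{-\delta s}\,dL_s-\phi\int_0^{t\wedge\tau}e^{-\delta s}f_s\,dB_s\ \le\ g(x)+M_{t\wedge\tau}.
\]

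\textbf{Step 3: localisation and limits.} Let $(\sigma_n)$ be a localising sequence for $M$. Replacing $t\wedge\tau$ by $t\wedge\tau\wedge\sigma_n$ and taking expectations kills the martingale; since $g\ge0$ on $[0,\infty)$ and $g\equiv0$ on $(-\infty,0)$, the leftmost term is $\ge0$ and may be dropped, giving $\E_x[\int_0^{t\wedge\tau\wedge\sigma_n}e^{-\delta s}dL_s]-\phi\,\E_x[\int_0^{t\wedge\tau\wedge\sigma_n}e^{-\delta s}f_s\,dB_s]\le g(x)$. Letting $n\to\infty$ and then $t\to\infty$, monotone convergence handles the dividend integral, and — as $(L,f)\in\Theta$ forces $\int_0^\tau e^{-\delta s}f_s\,dB_s\in L^1$ — dominated convergence handles the funding integral. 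Hence $g(x)\ge\E_x[\int_0^{\tau}e^{-\delta s}dL_s-\phi\int_0^{\tau}e^{-\delta s}f_s\,dB_s]$, and taking the supremum over $(L,f)\in\Theta$ yields $g(x)\ge V(x)$.

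\textbf{Main obstacle.} The conceptual content is the one-line observation that $g$ solving \eqref{equ:HJB} makes all drift and jump contributions nonpositive; the work lies in the analytic bookkeeping. The genuinely delicate points are: (i) $g$ is only $\mathcal C^2$ on the open half-line, whereas under barrier-type strategies $X$ may sit at $0$ on a time set of positive Lebesgue measure, so one must argue that the change-of-variables formula still applies — concavity is what saves the day, since it guarantees that $g'(0+)$ exists and that $g$ admits the affine majorant $g(x)\le g(0+)+g'(0+)x$; (ii) this affine bound, together with the integrability built into $\Theta$, is precisely what makes the compensator of the $B$-jumps (of rate $\beta\,\E_x\int e^{-\delta s}g(X_s+f_s)\,ds$) finite and justifies the Fubini and localisation steps; (iii) one must check that $\E_x[e^{-\delta(t\wedge\tau)}g(X_{t\wedge\tau})]$ is well defined before discarding it. None of this needs a new idea, but these are the steps that have to be written out with care.
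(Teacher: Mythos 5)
Your proposal is correct and follows essentially the same route as the paper: both decompose $e^{-\delta(t\wedge\tau)}g(X_{t\wedge\tau})$ plus the discounted cash flow via an It\^o/change-of-variables formula, bound the drift by the first part of the HJB-equation, the dividend terms by $g'\geq 1$, compensate the claim and funding jumps, and conclude via the supermartingale property and monotone/dominated convergence (the paper additionally passes to the right-continuous modification $\bar X_t=X_{t+}$ and treats the initial lump $\Delta L_{0+}$ separately, but these are the same bookkeeping points you flag). No substantive difference in approach or gap.
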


\begin{proof}
Let $g \in C^2(0,\infty)$ and $(L,f)$ be an admissible control strategy. In the following we will denote the state process $X^{L,f}_t$ depending on $(L,f)$ with $X_t$ and $\tau^{L,f}$ with $\tau$ for the sake of clarity.
Because we want to make use of important theorems from stochastic calculus we have to switch to the right-continuous process, see also Shreve et al. \cite[p. 60-62]{Shre}. We consider the process
\begin{equation}\label{eq:Y}
Y_t = e^{-\delta (t \wedge \tau)} g(\bar{X}_{t \wedge \tau}) + \int_{0}^{t \wedge \tau} e^{-\delta s} dL_{s+} -  \phi \int_{0}^{t \wedge \tau} e^{-\delta s} f_s dB_s,
\end{equation}
where $\bar{X}_t\coloneqq X_{t+}$.
First of all we apply the integration by parts formula to the first part of $Y$ and It$\hat{\text{o}}$'s formula for $\bar{X_s}$. We get
\begin{multline*}
e^{-\delta ( t \wedge \tau)} g(\bar{X}^\tau_{t})=  g(X_{0+}) + \int_{0+}^{t\wedge \tau} e^{-\delta s} [ -\delta g(\bar{X}_{s-})+ c g'(\bar{X}_{s-})]ds \\
- \int_{0+}^{t\wedge \tau} e^{-\delta s} g'(\bar{X}_{s-})dL^c_s +\sum_{0<s\leq t \wedge \tau} e^{-\delta s} \Delta g(\bar{X}_{s-}).
\end{multline*}
Moreover, we can split up the above sum of the discontinuous parts such that we obtain
\begin{align*}
\sum_{0<s\leq t \wedge \tau} e^{-\delta s} \Delta g(\bar{X}_{s-})=&\sum_{0<s \leq t\wedge \tau , \ L_{s+}\neq L_s} e^{-\delta s} \left[ g(\bar{X}_{s-} - \Delta L_{s+}) - g(\bar{X}_{s-})\right] \\
&+\sum_{0<s \leq t\wedge \tau , \ S_{s}\neq S_{s-}} e^{-\delta s} \left[ g(\bar{X}_{s-} - Y_{N_s}) - g(\bar{X}_{s-}) \right] \\
&+\sum_{0<s \leq t\wedge \tau , \ B_{s}\neq B_{s-}} e^{-\delta s} \left[ g(\bar{X}_{s-} + f_s) - g(\bar{X}_{s-})\right].
\end{align*}
As in \cite[p. 19 - 20]{AzMu} with $g' \geq 1$, we obtain for the sum belonging to the jumps of the dividend process the estimate 
\begin{multline*}
\sum_{0<s \leq t\wedge \tau , \ L_{s+}\neq L_s} e^{-\delta s} \left[ g(\bar{X}_{s-} - \Delta L_{s+}) - g(\bar{X}_{s-})\right] = \\
-\sum_{0<s \leq t\wedge \tau , \ L_{s+}\neq L_s} e^{-\delta s} \int_{0+}^{\Delta L_{s+}} g'(\bar{X}_{s-}-u)du\ \leq - \sum_{0<s \leq t\wedge \tau , \ L_{s+}\neq L_s} e^{-\delta s} \Delta L_{s+}.
\end{multline*}
For the sums with comprising the other jumps we take expectations and use the compensation formula to get
\begin{multline*}
\E_x\left[\sum_{0<s \leq t\wedge \tau , \ S_{s}\neq S_{s-}} e^{-\delta s} \left[ g(\bar{X}_{s-} - Y_{N_s}) - g(\bar{X}_{s-}) \right] \right]= \\
\E_x\left[ \int_{0+}^{t \wedge \tau} e^{-\delta s}  \left(\lambda \int_0^{\bar{X}_{s-}} g(\bar{X}_{s-} -y)dF_Y(y)-\lambda g(\bar{X}_{s-})   \right)ds \right]
\end{multline*}
and 
\begin{multline*}
\E_x\left[\sum_{0<s \leq t\wedge \tau , \ B_{s}\neq B_{s-}} e^{-\delta s} \left[ g(\bar{X}_{s-} + f_s) - g(\bar{X}_{s-})\right] \right]= \\ 
\E_x\left[  \int_{0+}^{t \wedge \tau} e^{-\delta s} \beta \left[g(\bar{X}_{s-} + f_s)-g(\bar{X}_{s-}) \right] ds\right].
\end{multline*}
Now, we exploit the results from above to obtain
\begin{align*}
\E_x\left[  e^{-\delta ( t \wedge \tau)} g(\bar{X}^\tau_{t})\right] \leq&
\E_x\Bigg[ g(X_{0+}) + \int_{0+}^{t\wedge \tau} e^{-\delta s} [ -\delta g(\bar{X}_{s-})+ c g'(\bar{X}_{s-})]ds\\
 &+ \int_{0+}^{t\wedge \tau} e^{-\delta s} \left( \lambda \int_{0}^{\bar{X}_{s-}} g(\bar{X}_{s-} -y)dF_Y(y)-\lambda g(\bar{X}_{s-})  \right) ds  \\
 &+\int_{0+}^{t\wedge \tau} e^{-\delta s}\beta \left[g(\bar{X}_{s-} + f_s)-g(\bar{X}_{s-})  \right]  ds - \int_{0+}^{t\wedge \tau} e^{-\delta s} dL_{s+} \Bigg].
\end{align*}
Next we use that $g$ solves the HJB-equation
\begin{align*}
\E_x\left[  e^{-\delta ( t \wedge \tau)} g(\bar{X}^\tau_{t})\right] \leq \E_x\Bigg[ g(X_{0+})- \int_{0+}^{t\wedge \tau} e^{-\delta s} dL_s + \int_{0+}^{t\wedge \tau} e^{-\delta s}\beta \phi f_s ds  \Bigg].
\end{align*}
Adding on both sides $\E_x[ \int_{0}^{t\wedge \tau} e^{-\delta s} dL_{s+} - \int_{0}^{t\wedge \tau} e^{-\delta s}\beta \phi f_s ds]$ and using the concavity of $g$ yields that
\begin{align*}
\E_x\left[ Y_t \right] \leq \E_x\Bigg[ g(X_{0+})+\Delta L_{0+}   \Bigg] \leq g(x).
\end{align*}
The last inequality yields that the process $Y$ is a supermartingale. Now we use this property to obtain
\begin{align*}
 g(x)= Y_0 \geq \E_x(Y_t) \geq \E_x\left[ \int_{0}^{t \wedge \tau } e^{-\delta s}( dL_s - \phi f_s dB_s)  \right] \geq \\
 \E_x\left[ \int_{0}^{\lfloor t \rfloor \wedge \tau } e^{-\delta s} dL_s \right] -  \E_x\left[ \int_{0}^{(\lfloor t \rfloor +1) \wedge \tau } e^{-\delta s}  \phi f_s dB_s  \right],
\end{align*}
where we exploited that $g\geq 0$. Considering the limit $t \to \infty$ and using monotone convergence yields that
\begin{align*}
 g(x)\geq \E_x\left[ \int_{0}^{\tau } e^{-\delta s}( dL_s - \phi f_s dB_s)  \right],
\end{align*}
taking the supremum over all admissible strategies gives the desired result:
\begin{align*}
 g(x)\geq V(x).
\end{align*}
\end{proof}
Since for all parameter constellations our constructed functions are linked to an admissible strategy, are twice differentiable and concave, we have that they dominate the value function. Furthermore, using the band type strategy specified by $(a^*,b^*)$ we have that the corresponding $Y$ from \eqref{eq:Y} is a martingale. Instead of using dominated convergence in the limitation procedure, one can even use bounded convergence, since $f^*_s\leq a^*$ and observe that
$\lim_{t\to\infty}\E_x\left[e^{-\delta (t\wedge\tau)}V(\bar{X}_t^\tau;a^*,b^*)\right]=0$.
\begin{corollary}
If $\tilde{b}>0$ and $\phi\geq 1$, the function $V(x;a^*,b^*)$ is the value function and the corresponding band type strategy is optimal.
\end{corollary}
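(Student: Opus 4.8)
The plan is to sandwich the value function between the candidate $V(\cdot;a^*,b^*)$ and the performance of its associated band strategy. One inequality is already available: the verification theorem gives $g\geq V$ for any positive, concave, $\mathcal{C}^2$ solution of the HJB-equation, so it suffices to (i) check that $V(\cdot;a^*,b^*)$ is such a function in every admissible regime, and (ii) show that the band strategy $(L^*,f^*)$ attached to $(a^*,b^*)$ is admissible and produces exactly the value $V(x;a^*,b^*)$.

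For (i), I would note that the hypotheses $\tilde b>0$ and $\phi\geq 1$ put us into precisely one of the three regimes treated above: $\phi=1$, where the Lemma yields $a^*=b^*$; $1<\phi<\tilde{V}'(0;\tilde{b})$, where the Theorem yields $0<a^*<b^*$; and $\phi\geq\tilde{V}'(0;\tilde{b})$, where the Proposition yields $a^*=0$, $b^*=\tilde b$ and $V(\cdot;0,\tilde b)=\tilde V(\cdot;\tilde b)$. In each of these cases it was already shown that $V(\cdot;a^*,b^*)$ from \eqref{eq:funcstructure} is $\mathcal{C}^2(0,\infty)$, concave, bounded below by a positive constant, and a solution of \eqref{equ:HJB}. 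Hence the verification theorem applies and delivers $V(x;a^*,b^*)\geq V(x)$ for all $x\geq 0$. (The restriction $\tilde b>0$ is what rules out the degenerate regime $(\delta+\lambda)^2\geq c\alpha\lambda$, which was handled separately.)

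For (ii), I would re-run the computation from the proof of the verification theorem, but now with the explicit control: continuous dividends that reflect the reserve at $b^*$, an initial lump payment $(x-b^*)^+$ when $x>b^*$, and $f^*_t=(a^*-X^{L^*,f^*}_{t})\,I_{\{X^{L^*,f^*}_{t}<a^*\}}$ at the jump times of $B$. Admissibility is immediate, because $f^*\leq a^*$ makes the funding term at most $\phi a^*\beta/\delta$ and the dividends are bounded by $(x-b^*)^+ + c/\delta$. The crucial observation is that with this particular strategy every inequality used in the verification proof is sharp: the dividend-jump estimate is an equality since the only lump payment lives on $[b^*,\infty)$, where $g'\equiv 1$; the continuous dividends act only at $b^*$, where again $g'=1$; and the first branch of the HJB-equation vanishes \emph{identically} along the trajectory, because \eqref{equ:5} and \eqref{equ:6} were constructed so that $cV'-(\lambda+\delta)V+\lambda\int_0^x V(x-y)\,dF_Y(y)=0$ on $[0,b^*]$, while the band's funding choice $f=a^*-x$ (resp.\ $f=0$ for $x\geq a^*$) is exactly the maximiser of $\sup_{f\geq 0}\{V(x+f)-V(x)-\phi f\}$ by concavity together with $V'(a^*)=\phi$. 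Consequently the process $Y$ in \eqref{eq:Y} is a local martingale; since the controlled reserve stays in $[0,b^*]$ (so $g(\bar X_t)$ is bounded) and $f^*\leq a^*$, it is in fact a true martingale, whence $\E_x[Y_t]=Y_0=V(x;a^*,b^*)$ for all $t$.

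It then remains to let $t\to\infty$. The term $e^{-\delta(t\wedge\tau)}g(\bar X^\tau_t)$ is bounded by $\max_{[0,b^*]}g$ and tends to $0$ pointwise — either $\tau<\infty$ and the reserve has just jumped strictly below $0$, so $g=0$ there, or $\tau=\infty$ and the discount factor annihilates the bounded term — so it vanishes in $L^1$ by bounded convergence; monotone convergence handles the dividend integral and bounded convergence (again using $f^*\leq a^*$) the funding integral. This gives $V(x;a^*,b^*)=\E_x\!\big[\int_0^{\tau}e^{-\delta s}(dL^*_s-\phi f^*_s\,dB_s)\big]\leq V(x)$, and together with (i) we conclude $V(x;a^*,b^*)=V(x)$, with $(L^*,f^*)$ optimal. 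I expect the only genuinely delicate point to be the upgrade from local to true martingale (equivalently, the interchange of limit and expectation), which is precisely where the uniform bounds $0\leq X^{L^*,f^*}_t\leq b^*$ and $f^*_t\leq a^*$ do the work; the rest is substitution of the explicit strategy into identities that have already been established.
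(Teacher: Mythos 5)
Your proposal follows essentially the same route as the paper: the verification theorem supplies $V(x;a^*,b^*)\geq V(x)$ because the constructed function is a positive, concave, $\mathcal{C}^2$ solution of the HJB-equation in each regime of $\phi$, while for the reverse inequality one observes that under the band strategy the process $Y$ from \eqref{eq:Y} is a true martingale (all inequalities in the verification argument becoming equalities), and the bounds $0\leq \bar X_{t\wedge\tau}\leq b^*$ and $f^*\leq a^*$ justify passing to the limit $t\to\infty$ with $\E_x[e^{-\delta(t\wedge\tau)}V(\bar X^\tau_t;a^*,b^*)]\to 0$. Your write-up merely makes explicit the details the paper compresses into the paragraph preceding the corollary; the argument is correct.
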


\section{Numerical illustration}

In this concluding section we present a numerical example which nicely illustrates the dependence of the optimal strategy on the parameter $\phi\geq 1$.
For this purpose we have chosen the parameters as follows. Concerning the reserve process we take $c = 1.5$ for the premium rate, $\lambda = 1$ for the intensity of the Poisson process corresponding to the claims, $\alpha = 1.5$ for the parameter of the exponential distribution of the claim size.
Furthermore, for the jump process $B$ we take $\beta = 2$, which corresponds to the expected arrivals of investors per time unit.
In terms of the interest rate we choose $\delta = 0.02$ and in order to illustrate the value function and the smooth fit conditions we fix $\phi = 1.5$ temporarily.\\
In Figure \ref{fig:ValFunc} we depict the difference between the value function of the usual dividend problem and the value function of the model with random capital supply with $a^* = 3.1746,\,b^*= 6.8526$. Figure \ref{fig:strats} illustrates how the transaction cost parameter $\phi$ affects the nature of the optimal strategy in terms of $(a^*,b^*)$. As proved above, we observe that for the case $\phi=1$ the two thresholds $a^*$ and $b^*$ coincide. Further, if $\phi$ increases, the area where we search for additional funding shrinks exactly up to the certain point where it disappears.
This exactly happens at $\phi =\tilde{V}'(0;\tilde{b})$. Simultaneously, the dividend threshold $b^*$ is increasing in $\phi$ and reaches its maximum level at the point where $a^*$ becomes zero, namely, again if $\phi =\tilde{V}'(0;\tilde{b})$. We observe that the maximum level for $b^*$ is the dividend barrier level $\tilde{b}$ of the usual dividend problem.
On top of this we even see (and indeed proved) that for values of $\phi$ larger than $\tilde{V}'(0;\tilde{b})$ the optimal strategy does not change anymore.

\begin{figure}[htb]\centering
\begin{minipage}{0.9\linewidth}
\includegraphics[width=\linewidth]{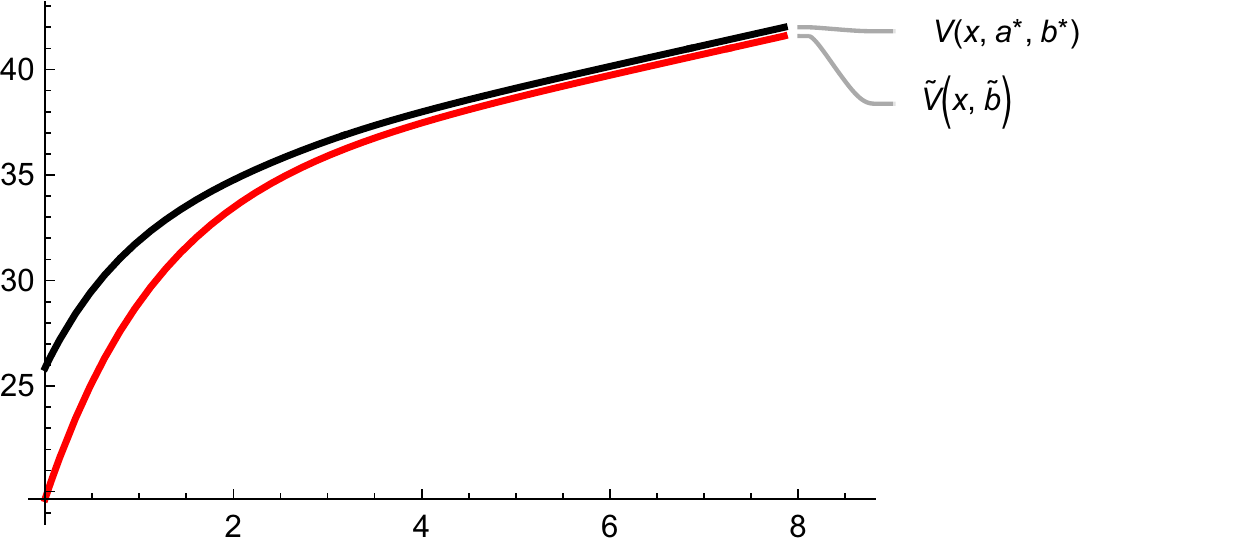}
\caption{Value functions}\label{fig:ValFunc}
\vspace*{0.5cm}
\end{minipage}
\begin{minipage}{0.9\linewidth}
\includegraphics[width=\linewidth]{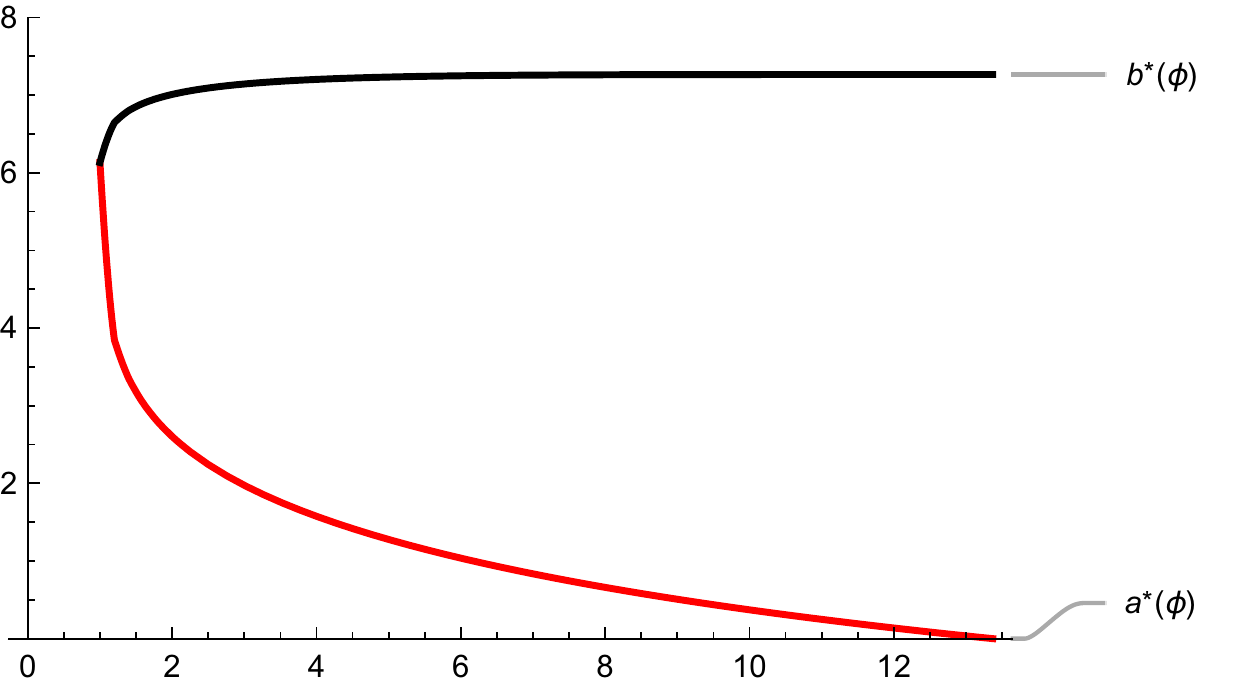}
\caption{Strategies as functions of $\phi$}\label{fig:strats}
\end{minipage}
\end{figure}

In the Figures \ref{fig:1stsmooth} and \ref{fig:2ndsmooth} we illustrate the first and second order smooth fit property. In Figure \ref{fig:1stsmooth} we plotted the first derivative of the value function to point out that at the lower optimal threshold $a^*$ we have  $V'(a^*;a^*,b^*)= \phi$, which is, according to our theoretical treatment, equivalent to the second order smooth fit condition. Further, at the upper optimal threshold $b^*$ we have that $V'(b^*;a^*,b^*)= 1$. Finally, Figure \ref{fig:2ndsmooth} shows the second derivative of the functions $V_l(x;a^*,b^*)$ and $V_u(x;a^*,b^*)$ and illustrates their behaviour in the respective domain of interest.

\begin{figure}[htb]\centering
\begin{minipage}{0.9\linewidth}
\includegraphics[width=\linewidth]{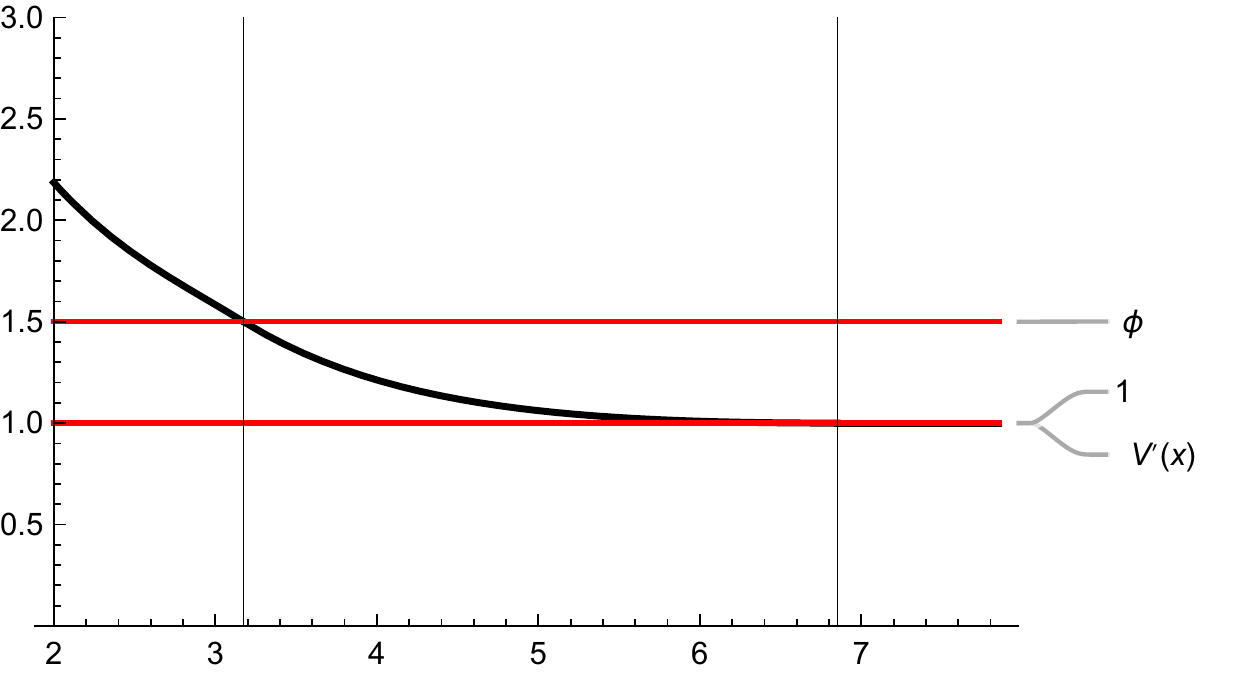}
\caption{1st order smooth fit}\label{fig:1stsmooth}
\vspace*{0.5cm}
\end{minipage}

\begin{minipage}{0.9\linewidth}
\includegraphics[width=\linewidth]{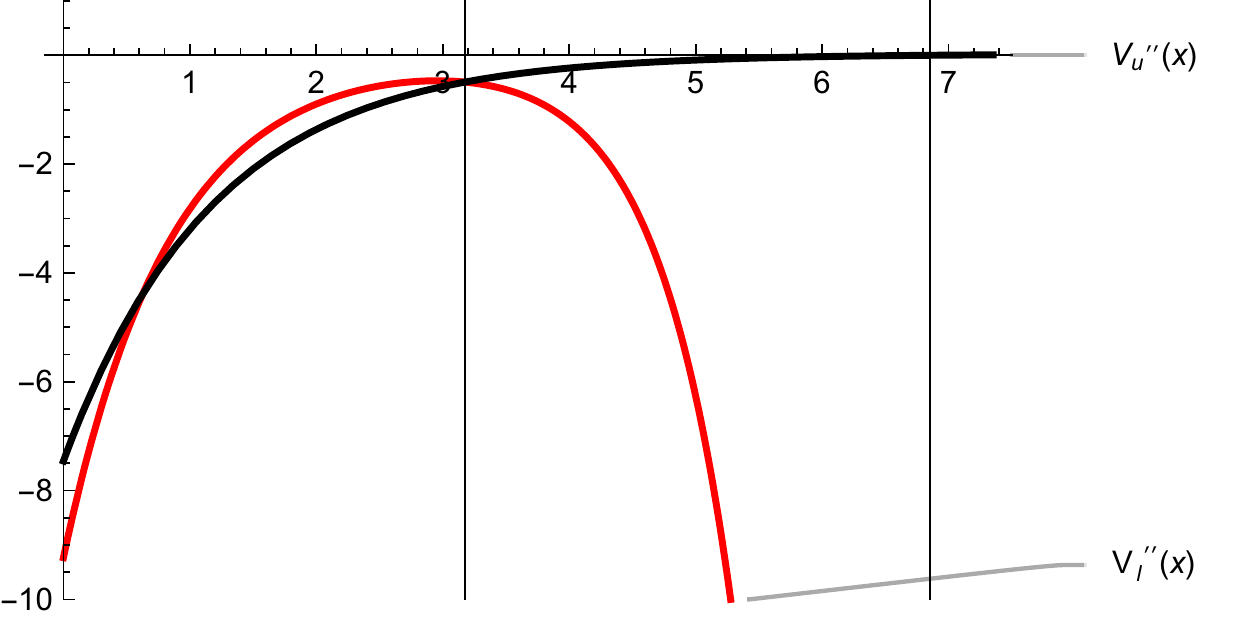}
\caption{2nd order smooth fit}\label{fig:2ndsmooth}
\end{minipage}
\end{figure}

% \bibliography{Dividend_Problem}
% \bibliographystyle{spmpsci} 

\end{document}